\def\R#1{\mathbb{R}^{#1}}
\newcommand{\vsym}[1]{\boldsymbol{#1}}
\def\v#1{\vsym{#1}} 
\newcommand{\vb}{\boldsymbol{b}}
\newcommand{\vd}{\boldsymbol{d}}
\newcommand{\vg}{\boldsymbol{g}}
\newcommand{\vu}{\boldsymbol{u}}
\newcommand{\vv}{\boldsymbol{v}}
\newcommand{\vw}{\boldsymbol{w}}
\newcommand{\vx}{\boldsymbol{x}}
\newcommand{\msym}[1]{\boldsymbol{#1}}
\def\m#1{\msym{#1}} 
\newcommand{\mA}{\boldsymbol{A}}
\newcommand{\mD}{\boldsymbol{D}}
\newcommand{\mI}{\boldsymbol{I}}
\newcommand{\mM}{\boldsymbol{M}}
\newcommand{\mS}{\boldsymbol{S}}
\newcommand{\mV}{\boldsymbol{V}}
\newcommand{\mW}{\boldsymbol{W}}
\newcommand{\cR}{\mathcal{R}}
\def\E{\mathbb{E}} 
\def\P{\mathbb{P}} 
\def\V{\mathrm{V}} 
\newcommand\reallywidehat[1]{%
\savestack{\tmpbox}{\stretchto{%
  \scaleto{%
    \scalerel*[\widthof{\ensuremath{#1}}]{\kern-.6pt\bigwedge\kern-.6pt}%
    {\rule[-\textheight/2]{1ex}{\textheight}}
  }{\textheight}%
}{0.5ex}}%
\stackon[1pt]{#1}{\tmpbox}%
}
\newcommand{\ftree}[1]{%
\begin{forest}
for tree={
    font=\ttfamily,
    grow'=0,
    child anchor=west,
    parent anchor=south,
    anchor=west,
    calign=first,
    edge path={
      \noexpand\path [draw, \forestoption{edge}]
      (!u.south west) +(7.5pt,0) |- node[fill,inner sep=1.25pt] {} (.child anchor)\forestoption{edge label};
    },
    before typesetting nodes={
      if n=1
        {insert before={[,phantom]}}
        {}
    },
    fit=band,
    before computing xy={l=15pt},
  }%
  #1%
\end{forest}
}
\num\newcommand{\f12}{\frac{1}{2}}
\newcommand*{\mm}{%
  \leavevmode
  \hphantom{0}%
  \llap{%
    \settowidth{\dimen0 }{$0$}%
    \resizebox{1.1\dimen0 }{\height}{$-$}%
  }%
}
\author{
  Zachary del Rosario\thanks{Department of Aeronautics and Astronautics,
    Stanford University, Stanford, CA
    (zdr@stanford.edu)}
  \and
  Minyong Lee\thanks{Department of Statistics, Stanford University, Stanford, CA
    (minyong@stanford.edu)}
  \and
  Gianluca Iaccarino\thanks{Department of Mechanical Engineering, Stanford, CA
    (jops@stanford.edu)}\footnotemark[3]
}
\newcommand{\var}{z} 
\newcommand{\tar}{x} 
\newcommand{\qoi}{q} 
\newcommand{\EX}{E}  
\newcommand{\LU}{L}  
\newcommand{\PI}{P}  
\newcommand{\nvar}{p} 
\newcommand{\ndim}{d} 
\newcommand{\rnk}{r} 
\newtheorem*{remark}{Remark}
\newtheorem{theorem}{Theorem}
\newcommand{\inlinelatex}[1]{#1}
\date{November 6, 2017}
\title{Lurking Variable Detection via Dimensional Analysis}
\begin{document}

\maketitle

\begin{abstract}
Lurking variables represent hidden information, and preclude a
full understanding of phenomena of interest. Detection is usually
based on serendipity -- visual detection of unexplained,
systematic variation. However, these approaches are doomed to
fail if the lurking variables do not vary. In this article, we
address these challenges by introducing formal hypothesis tests
for the presence of lurking variables, based on Dimensional
Analysis. These procedures utilize a modified form of the
Buckingham $\pi$ theorem to provide structure for a suitable null
hypothesis. We present analytic tools for reasoning about lurking
variables in physical phenomena, construct procedures to handle
cases of increasing complexity, and present examples of their
application to engineering problems. The results of this work
enable algorithm-driven lurking variable detection, complementing
a traditionally inspection-based approach.
\end{abstract}

\section{Introduction}
\label{sec:org1320493}
Understanding the relationship between inputs (variables) and outputs
(responses) is of critical importance in uncertainty quantification (UQ).
Relationships of this sort guide engineering practice, and are used to optimize
and certify designs. UQ seeks to quantify variability, whether in a forward or
inverse sense, in order to enable informed decision making and mitigate or
reduce uncertainty. The first step in this practice is to choose the salient
responses or \emph{quantities of interest}, and identify the variables which affect
these quantities. However, efforts to quantify and control uncertainty will be
thwarted if critical variables are neglected in the analysis -- if \emph{lurking
variables} affect a quantity of interest.

Lurking variables are, by definition, confounding. Examples include unnoticed
drift of alignment in laser measurements, unmeasured geometry differences
between articles in a wind tunnel, and uncontrolled temperature fluctuations
during materials testing. Such variables are called lurking if they affect a
studied response, but are unaccounted for in the analysis. More explicitly,
there exist \(\nvar\) variables which affect some quantity of interest; a subset
of these are known, while the remainder are said to be lurking
variables.\cite{box1966} Such lurking variables represent a pernicious sort of
uncertainty -- a kind of unknown unknown. Whether studying a closed-source
black-box simulation code or a physical experiment, the presence of unaccounted
factors will stymie the development of understanding.

As a historical example, consider the 1883 work of Osborne Reynolds
\cite{reynolds1883} on the flow of a viscous fluid through a pipe. Reynolds
studied the effects of the fluid density, viscosity, bulk velocity, and pipe
diameter on pressure losses. He found that for certain physical regimes, these
variables were insufficient to describe the observed variability. Through
domain-specific knowledge, Reynolds was able to deduce that the surface
roughness of the pipes accounted for the unexplained behavior, a deduction that
was confirmed and thoroughly studied in later works.\cite{nikuradse1950}

While the lurking variable issue Reynolds encountered generalizes to other
settings, his solution technique certainly does not. Ideally, one would like a
strategy for identifying when lurking variables affect a studied response. While
the issue of lurking variables has received less attention in the Uncertainty
Quantification community, Statisticians have been grappling with this issue for
decades: Joiner \cite{joiner1981} recommends checking ``a variety of plots of the
data and the residuals'' as tools to combat lurking variables. Such graphical
inspection-based approaches are foundationally important. However, in the case
where a lurking variable is \textbf{fixed in value} for the duration of the experiment,
even the most careful graphical inspection (of this sort) is doomed to fail in
detecting it. An analyst would ideally like to be able to check whether the
observed relationship between response and predictors is consistent with the a
priori assumption of no lurking variables. In general, no structure exists to
frame such a hypothesis. However, in the context of a physical experiment,
Dimensional Analysis provides the means to impose such structure.

This structure is provided by the Buckingham \(\pi\) theorem, a consequence of
dimensional homogeneity that applies to all physical
systems.\cite[Ch. 0]{barenblatt1996} Dimensional Analysis begins with a priori
information about the physical system -- the variables' physical dimensions --
and imposes a constraint on the allowable functional form relating predictors
and response. Palmer \cite{palmer2008} colorfully refers to Dimensional Analysis
as a means ``to get something for nothing''. However, Dimensional Analysis
hinges on correct information; Albrecht et al. \cite{albrecht2013} write in the
context of experimental design that ``the scientist must know, a priori, the
complete set of independent variables describing the behavior of a system. If
the independent variables are misspecified (e.g., if one variable is missing),
the results of the Dimensional Analysis experiment may be completely unusable.''
It is this failure mode we aim to address. The key insight of the present work
is to leverage the Buckingham \(\pi\) theorem as testable structure.

In this work, we present a procedure for null-hypothesis significance testing of
the presence of lurking variables. This procedure is based on a formal truth
model derived from Dimensional Analysis, which includes all relevant factors.
This idea has been pursued in other works; Pearl and Bareinboim \cite{pearl2014}
present nonparametric structural equations models that incorporate so-called
`exogenous variables', with an eye towards generality. We restrict attention to
dimensional lurking variables, in order to impose testable structure and develop
a detection procedure. Previous work has explored model choices which respect
dimensional homogeneity, such as the additive power-law model of Shen and Lin
\cite{shen2017conjugate}. We avoid specific model choices, and instead work with
the fundamental properties of dimensionally homogeneous relationships. Shen and
Lin highlight an important advantage of Dimensional Analysis; namely, its
ability to provide meaningful insight into physical variables, even ones which
are fixed in value. They note ``with the help of (Dimensional Analysis), the
effect of these physical constants can be automatically discovered and
incorporated into the results without actually varying them.'' It is this
property which enables our procedure to avoid reliance on serendipity, and
detect lurking variables which are fixed in value. Furthermore, below we develop
the analysis and methods to \emph{choose} to fix a predictor, and perform detection
in the face of such \emph{pinned variables}.

An outline of this article is as follows. Section \ref{sec:org6f9d706} provides
an overview of Dimensional Analysis and derives a testable form of the
Buckingham \(\pi\) theorem for lurking variable detection. Section \ref{sec:orgb99b5a2} reviews Stein's lemma and Hotelling's \(T^2\) test, which form
the statistical basis for our procedures. Section \ref{sec:org26bf4ba}
combines Dimensional Analysis with hypothesis testing for a lurking variable
detection procedure, and provides some guidance on sampling design and power
considerations. Section \ref{sec:orgbc56bcc} demonstrates this procedure on a
number of engineering-motivated problems, while Section \ref{sec:org0b49157} provides
concluding remarks. The results of this paper are to enable \emph{algorithm-driven}
detection of lurking variables, capabilities which lean not on expert experience
or serendipity, but rather on an automated, data-driven procedure.

\section{Dimensional Analysis}
\label{sec:org6f9d706}
  Dimensional Analysis is a fundamental idea from physics. It is based on a
simple observation; the physical world is indifferent to the arbitrary unit
system we define to measure it. A \emph{unit} is an agreed-upon standard by which we
measure physical quantities. In the International System of units, there are
seven such \emph{base units}: the meter \((m)\), kilogram \((kg)\), second \((s)\), ampere
\((A)\), kelvin \((K)\), mole \((mol)\) and candela \((cd)\).\cite{thompson2008} All other
\emph{derived units} may be expressed in terms of these seven quantities; e.g. the
Newton \(kg\cdot m/s^2\). The choice of base units is itself arbitrary, and
constitutes a choice of a class of unit system.

Contrast these standard-defined units with \emph{dimensions}. Units such as the
meter, kilogram, and second are standards by which we measure length (\(L\)), mass
(\(M\)), and time (\(T\)). Barenblatt \cite[Ch. 1]{barenblatt1996} formally
defines the \emph{dimension function} or \emph{dimension} as ``the function that
determines the factor by which the numerical value of a physical quantity
changes upon passage from the original system of units to another system within
a given class''. Engineers commonly denote this dimension function by square
brackets; e.g. for a force \(F\), we have \([F]=M^1L^1T^{-2}\). For such a quantity,
rescaling the time unit by a factor \(c\) rescales \(F\) by a factor \(c^{-2}\). Note
that the dimension function is required to be a power-product, following from
the principle of absolute significance of relative
magnitude.\cite[Ch. 2]{bridgman1922dimensional} A quantity with non-unity
dimension is called a \emph{dimensional quantity}, while a quantity with dimension
unity is called a \emph{dimensionless quantity}. To avoid confusion with dimension in
the sense of number of variables, we will use the term \emph{physical dimension} for
the dimension function.

Note that dimensional quantities are subject to change if one modifies their
unit system (See Remark \ref{rmk:units}). Since the physical world is invariant
to such capricious variation, it must instead depend upon dimensionless
quantities, which are invariant to changes of unit system. The formal version of
this statement is the Buckingham \(\pi\) theorem.\cite{buckingham1914}

\subsection{The Buckingham \(\pi\) theorem}
\label{sec:org700160e}
In what follows, we use unbolded symbols for scalars, bolded lowercase symbols
for vectors, and bolded uppercase letters for matrices. Given a dimensionless
quantity of interest (qoi) \(\pi=f(\v\var)\in\R{}\) affected by factors
\(\v\var\in\R{\nvar}\) with \(\rnk\geq0\) independent physical dimensions (to be
precisely defined below) among a total of \(\ndim\geq\rnk\) physical dimensions,
this physical relationship may be re-expressed as

\begin{equation}
  \begin{aligned}
    \pi &= f(\v\var), \\
    &= \psi(\pi_1,\dots,\pi_{\nvar-\rnk}),
  \end{aligned}
\end{equation}

\noindent where the \(\pi_i\) are independent dimensionless quantities, and \(\psi\)
is a new function of these \(\pi_i\) alone. This leads to a simplification through
a reduction in the number of variables. Some physical quantities are inherently
dimensionless, such as angles.\cite[Table 3]{thompson2008} However, most
dimensionless quantities \(\pi_i\) are formed as combinations of dimensional
quantities \(\var_j\) in the form of a power-product

\begin{equation}
  \pi_i = \prod_{j=1}^p \var_j^{v_{ij}},
\end{equation}

\noindent where \(\vv_i\in\R{\nvar}\). The elements of \(\vv_i\) are chosen such that
\([\pi_i]=1\).

Valid dimensionless quantities are defined by the nullspace of a particular
matrix, described here. Let \(\mD\in\R{d\times p}\) be the \emph{dimension matrix} for
the \(p\) physical inputs \(\v\var=(z_1,\dots,z_p)^T\) and \(d\) physical dimensions
necessary to describe them. Note that in the SI system, we have
\(d\leq7\).\cite{thompson2008} The introduction of \(\mD\) allows for a linear algebra
definition of \(r\); we have \(r=\text{Rank}(\mD)\), and so \(r\leq d\). Similarly,
the \(\pi_i\) are independent in a linear algebra sense, with the dimensionless
quantities defined by their respective vectors \(\vv_i\).

The columns of \(\mD\) define the physical dimensions for the associated variable.
For example, if \(\var_1=\rho_F\) is a fluid density with physical dimensions
\([\rho_F]=M^1L^{-3}\), and we are working with \(\ndim=3\) dimensions \((M,L,T)\),
then \(\rho_F\)'s corresponding column in \(\mD\) will be \(\vd_1=(+1,-3,+0)^T\). For
convenience, we introduce the \emph{dimension vector operator} \(\vd(\cdot)\), which
returns the vector of physical dimension exponents; e.g. \(\vd(\rho_F)=\vd_1\)
above. Note that the dimension vector operator is only defined with respect to a
dimension matrix. The nullspace of \(\mD\) defines the dimensionless quantities of
the physical system.

One may regard vectors in the domain (\(\R{\nvar}\)) of \(\mD\) as defining products
of input quantities; suppose we expand our example to consider the inputs for
Reynolds' problem of Rough Pipe Flow, consisting of a fluid density \(\rho_F\),
viscosity \(\mu_F\), and (bulk) velocity \(U_F\), flowing through a pipe with
diameter \(d_P\) and roughness lengthscale\footnote{Roughness lengthscale is a measure
of surface roughness of the interior of a pipe; it is often considered a
material property.\cite{nikuradse1950}} \(\epsilon_P\). We then have the dimension
matrix considered in Table \ref{tab:pipe_dimensions}.

\begin{table}[!ht]
  \centering
  \begin{tabular}{@{}lccccc@{}}
    \toprule
    Dimension  & $\rho_F$ & $U_F$   & $d_P$ &  $\mu_F$ & $\epsilon_P$\\
    \midrule
    Mass (M)   &   1    &  0    &  0  &    1   &  0 \\
    Length (L) & \mm3   &  1    &  1  &  \mm1  &  1 \\
    Time (T)   &   0    & \mm1  &  0  &  \mm1  &  0 \\
    \bottomrule
  \end{tabular}
  \caption{Dimension matrix for Rough Pipe Flow.}
  \label{tab:pipe_dimensions}
\end{table}

\noindent The vector \(\vv\equiv(+1,+1,+1,-1,+0)^T\) lies in the nullspace of
\(\mD\) for Rough Pipe Flow, and may be understood as the powers involved in the
product \(Re=\rho_F^1U_F^1d_P^1\mu_F^{-1}\epsilon_P^0\); this is a form of the
Reynolds number, a classic dimensionless quantity from fluid
mechanics.\cite{white2011} We elaborate on this example below.

\subsection{Illustrative example}
\label{sec:org70885ee}
In this section, we consider an example application of Dimensional Analysis to
illustrate both its application, and the class of problem our procedures are
designed to solve. We consider the physical problem of Rough Pipe Flow; that is,
the flow of a viscous fluid through a rough pipe, visually depicted in Figure
\ref{fig:reynolds}.

\begin{figure}[!ht]
\begin{minipage}{0.49\textwidth}
\includegraphics[width=0.95\textwidth]{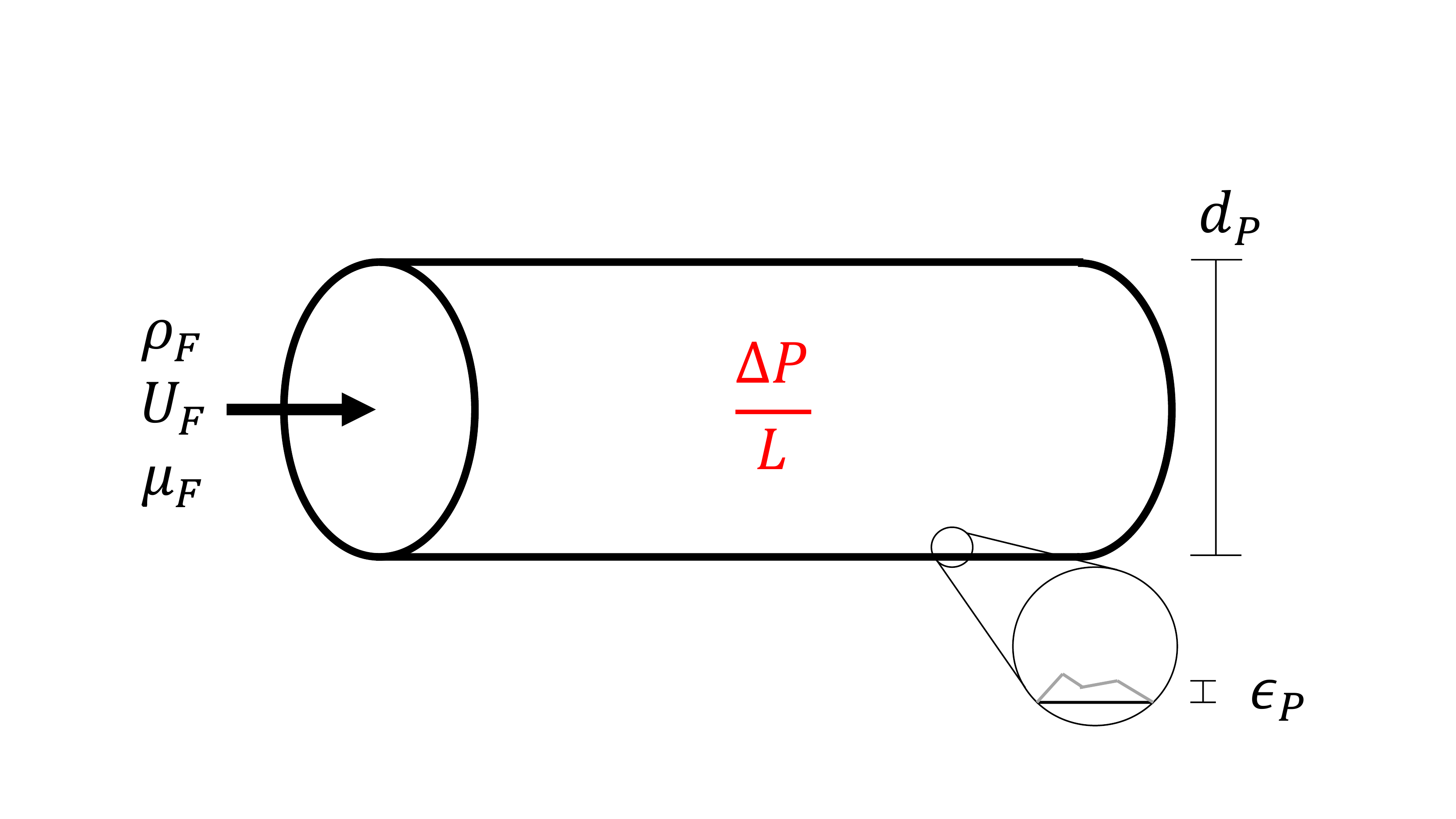}
\end{minipage} %
\begin{minipage}{0.49\textwidth}
\includegraphics[width=0.95\textwidth]{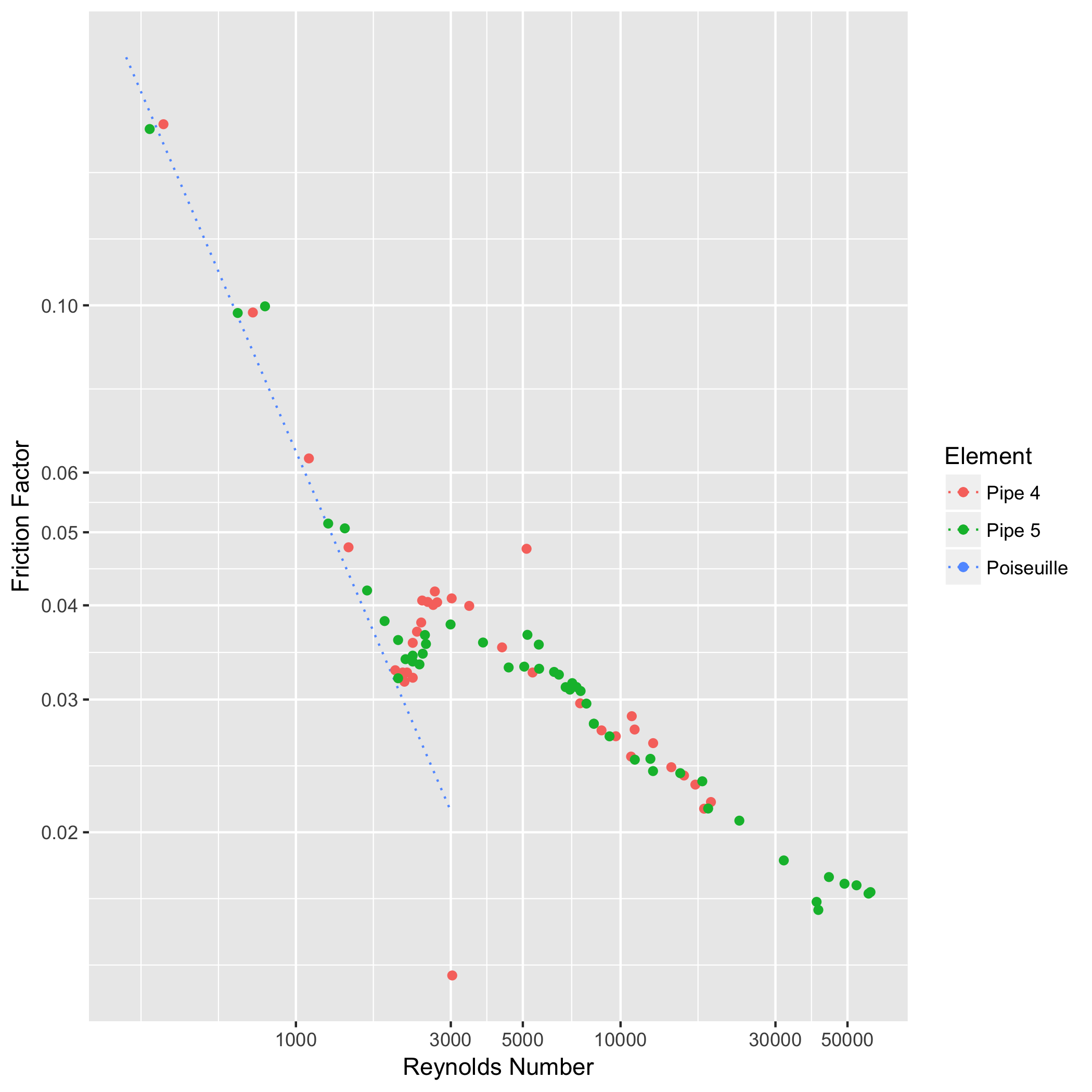}
\end{minipage}
\caption{Schematic for Rough Pipe Flow (left) and Reynolds original 1883 data (right).
  Osborne Reynolds constructed pipes of varying diameter and modified
  the fluid velocity and temperature conditions (affecting viscosity); his
  original data is non-dimensionalized and plotted against the
  Dimensional Analysis-predicted dimensionless variable in the right panel.
  The data approximately collapse to a one-dimensional function of the Reynolds
  number, demonstrating the simplifying power of Dimensional Analysis.
  Note that in laminar conditions ($Re\lessapprox3000$), the Poiseuille
  equation is a good model for the response, while behavior changes dramatically
  in turbulent conditions.\cite{white2011} After the
  turbulent onset, the relative roughness $\epsilon_P/d_P$ affects the response,
  which Reynolds observed not in his data, but in that of Henry
  Darcy.\cite{nikuradse1950}}
\label{fig:reynolds}
\end{figure}

In his seminal paper, Osborne Reynolds \cite{reynolds1883} considered the effects
of the fluid bulk velocity \(U_F\), density \(\rho_F\), and viscosity \(\mu_F\), as
well as the pipe diameter \(d_P\) on qualitative and quantitative behavior of the
resulting flow. To apply Dimensional Analysis, one must first write down the
physical dimensions of the variables, as shown in Table
\ref{tab:pipe_dimensions_lurk}. This dimension matrix has a one-dimensional
nullspace, for which the vector \((1,1,1,-1)^T\) is a basis. The corresponding
dimensionless quantity is \(\rho_FU_Fd_F/\mu_F\), which was first considered
by Reynolds in his 1883 work.

\begin{table}[!ht]
  \centering
  \begin{tabular}{@{}lccccc@{}}
    \toprule
    Dimension  & $\rho_F$ & $U_F$   & $d_P$ &  $\mu_F$ \\
    \midrule
    Mass (M)   &   1    &  0    &  0  &    1   \\
    Length (L) & \mm3   &  1    &  1  &  \mm1  \\
    Time (T)   &   0    & \mm1  &  0  &  \mm1 \\
    \bottomrule
  \end{tabular}
  \caption{Dimension matrix for Rough Pipe Flow, neglecting roughness.}
  \label{tab:pipe_dimensions_lurk}
\end{table}

Reynolds found that his own data collapsed to a one dimensional function against
this dimensionless quantity (Fig. \ref{fig:reynolds}), and so applied his
findings to the data of Henry Darcy. Darcy had considered a number of pipes of
different materials, and Reynolds found he needed to apply a correction term in
order to bring the different datasets into agreement. Reynolds noted

``Darcy's pipes were all of them uneven between the gauge points, the glass and
the iron varying as much as 20 per cent.(sic) in section. The lead were by far
the most uniform, so that it is not impossible that the differences in the
values of \(n\) may be due to this unevenness. But the number of joins and
unevenness of the tarred pipes corresponded very nearly with the new cast iron,
and between these there is a very decided difference in the value of \(n\). This
must be attributed to the roughness of the cast iron surface.''

Reynolds correctly deduced the significance of the pipe roughness using prior
experience, but did not include a variable to represent it in his 1883 work. One
may regard roughness \(\epsilon_P\) as a lurking variable in Reynolds' original
setting. Reynolds manufactured two pipes (Pipe No. 4 and Pipe No. 5) of varying
diameter but the same material, varying \(d_P\) but leaving \(\epsilon_P\) fixed.
This feature of the experiment would have rendered lurking variable detection
via the usual approaches impossible. Could an experimentalist have detected the
lurking roughness through a statistical approach? We will provide evidence of an
affirmative answer through the examples below, but first provide some intuition
for how such a procedure is possible.

Note that considering all the salient factors for Rough Pipe Flow yields the
dimension matrix shown in Table \ref{tab:pipe_dimensions}. This matrix has a
two-dimensional nullspace, for which an acceptable basis is
\(\{(1,1,1,-1,0)^T,(0,-1,0,0,1)^T\}\). These vectors correspond respectively to
the Reynolds number and the \emph{relative roughness}
\(\epsilon_P/d_P\).\cite{nikuradse1950} Note that in Reynolds' setting, the observed
variables are \(\{\rho_F,U_F,d_P,\mu_F\}\) while the sole lurking variable is
\(\epsilon_P\). In this case there is only one predicted dimensionless quantity,
and so naive Dimensional Analysis suggests the response is one-dimensional in
\(Re\). However, the true physical relationship depends additionally on the
relative roughness \(\epsilon_P/d_P\); this results in additional variability due
to \(d_P\), and in two-dimensional behavior unpredicted by Dimensional Analysis on
the observed variables alone. One could attempt to perform visual inspection
based on the predicted dimensionless quantities; however, we will instead build
a formal hypothesis test. Below, we present a modified version of Buckingham
\(\pi\), first shown by Constantine et al.\cite{constantine2016} This provides a
\emph{sufficient dimension reduction} based on a priori knowledge, contingent on the
absence of lurking variables.\cite{adragni2009} This is the key to our lurking
variable detection strategy.
\subsection{The Pi Subspace and lurking variables}
\label{sec:org644fea1}
   The effect of the \(\log\) operator is to turn power-products into linear
combinations. If we define \(\v\tar\equiv\log(\v\var)\in\R{\nvar}\), with the
\(\log\) (and \(\exp\) to follow) interpreted in the elementwise sense, we may
rewrite the Buckingham \(\pi\) theorem as

\begin{equation}
  \begin{aligned}
    \pi &= f(\v\var), \\
    &= \psi(\pi_1,\dots,\pi_{\nvar-\rnk}), \\
    &= \psi'(\mV^T\v\tar),
  \end{aligned}
\end{equation}

\noindent where \(\psi'(\v\xi) = \psi(\exp(\v\xi))\), and
\(\mV\in\R{\nvar\times(\nvar-\rnk)}\) is a basis for the nullspace of \(\mD\). Let
\(\cR(\mA)\) denote the columnspace of a matrix \(\mA\). The subspace \(\cR(\mV)\) is
known from \(\mD\); if all the physical inputs \(\v\var\) have been correctly
identified, then \(\mD\) is known a priori. Suppose that \(\pi\) is measured via
some noisy instrument. Then \(\pi=\psi'(\mV^T\v\tar)+\epsilon'\). If we assume
\(\epsilon'\) to be unbiased, we have

\begin{equation}
  \pi | \v\tar \sim \pi | \mV^T\v\tar;
\end{equation}

\noindent that is, \(\cR(\mV)\) is a sufficient dimension
reduction.\cite{adragni2009} Intuitively, a subspace is sufficient if it captures
all available information about a qoi; moving orthogonal to \(\mV\) results in no
change to \(\pi\) outside random fluctuations. Since \(\cR(\mV)\) is determined from
the Buckingham \(\pi\) theorem, we call it the \emph{pi subspace}. Note that while the
pi subspace is uniquely determined by \(\mD\), the matrix \(\mV\) must be chosen.
Selecting an appropriate \(\mV\) is certainly of interest; both in engineering
\cite{delrosario2017} and statistical circles \cite{shen2017conjugate}. However, in
what follows we need only \(\cR(\mV)\), so the precise choice of \(\mV\) is not an
issue we consider in this work.

Of greater import in this discussion is the choice of \(\v\tar\) (equivalently
\(\v\var\)). An analyst chooses the relevant physical quantities based on previous
experience or intuition. Previous experience may fail to generalize, and intuition
can be faulty. In these cases, even an experienced investigator may fail to
identify all the relevant factors, and may instead consider \(\nvar_{\EX}<\nvar\)
observed or \emph{exposed} variables \(\v\tar_{\EX}\in\R{\nvar_{\EX}}\), leaving out
\(\nvar-\nvar_{\EX}\) lurking variables \(\v\tar_{\LU}\in\R{\nvar-\nvar_{\EX}}\).
Note that the ordering of variables in \(\vx\) is arbitrary; we assume an ordering
and write

\begin{equation}\label{eq:split}
\v\tar^T = (\v\tar_{\EX}^T,\v\tar_{\LU}^T).
\end{equation}

\noindent The analyst varies \(\v\tar_{\EX}\) in order to learn about the
functional dependence of the qoi on these exposed variables. In practice, the
analyst is aware of \(\v\tar_{\EX}\) and their physical dimensions
\(\mD_{\EX}\in\R{\ndim\times \nvar_{\EX}}\), but is totally ignorant of
\(\v\tar_{\LU}\) and their physical dimensions \(\mD_{\LU}\in\R{\ndim\times
(\nvar-\nvar_{\EX})}\). The full dimension matrix is then
\(\mD=[\mD_{\EX},\mD_{\LU}]\). In such a setting, the analyst would derive an
incorrect pi subspace corresponding to
\(\mV'_{\EX}\in\R{\nvar_{\EX}\times(\nvar_{\EX}-\rnk_{\EX})}\), where
\(\mD_{\EX}\mV'_{\EX}=0\). This is a recognized issue in the study of Dimensional
Analysis.\cite{sonin2001} In the case where
\(\cR(\mV'_{\EX})\not\supseteq\cR(\mV)\), the derived subspace will not be
sufficient. Note that the Buckingham \(\pi\) theorem constrains the gradient of
our qoi to live within the pi subspace; we may use this fact to provide
diagnostic information.

Note that

\begin{equation}
\nabla_{\v\tar}^T\pi = (\nabla_{\v\tar_{\EX}}^T\pi,\nabla_{\v\tar_{\LU}}^T\pi).
\end{equation}

\noindent Since \(\mD\nabla_{\v\tar}\pi=0\), we have

\begin{equation}\label{eq:dim-match}
  \mD_{\EX}\nabla_{\v\tar_{\EX}}\pi = -\mD_{\LU}\nabla_{\v\tar_{\LU}}\pi.
\end{equation}

\noindent \Cref{eq:dim-match} demonstrates that the vector of physical
dimensions of the exposed variables matches that of the lurking variables.
Furthermore, the left-hand side of \eqref{eq:dim-match} is composed of
known (or estimable) quantities. If \(\mD_{\EX}\nabla_{\v\tar_{\EX}}\pi\) is
nonzero, it signals that 1) a lurking variable exists, and 2) it possesses
physical dimensions aligned with \(\mD_{\LU}\nabla_{\v\tar_{\LU}}\pi\). In the
Rough Pipe Flow example, the entries of \(\mD_{\EX}\nabla_{\v\tar_{\EX}}\pi\)
correspond to Mass, Length, and Time, respectively. Thus if
\(\mD_{\EX}\nabla_{\v\tar_{\EX}}\pi=(0,c,0^T)\), a lurking lengthscale affects the
qoi. The information provided is richer than a simple binary detection; we can
gleam some physical information about the lurking variable. To capitalize on
this observation, we will base our detection procedure on estimating the
gradient of our quantity of interest.

There are important caveats to note. We may have \(\mD_{\LU}=0\) (i.e. the
\(\v\var_{\LU}\) are dimensionless), or we may have dimensional inputs but
\(\nabla_{\v\tar_{\LU}}\pi\neq0\) with \(\mD_{\LU}\nabla_{\v\tar_{\LU}}\pi=0\) (i.e.
the lurking variables form a dimensionless quantity). The latter case is
unlikely; this occurs when the analyst truly does not know much about the
problem at hand. The former case is more challenging -- natural physical
quantities exist that are inherently dimensionless, such as angles. To combat
these issues, an analyst may choose to decompose such dimensionless quantities
in terms of other dimensional ones; an angle may be considered a measure between
two vectors, so an analyst may introduce lengthscales that form the desired
angle. However, this requires intimate understanding of the failure mode, and
does not address the fundamental issue. In the case where such an unknown is
dimensionless, detection must be based on some other principle, as Dimensional
Analysis will not be of help.

\subsection{Non-dimensionalization}
\label{sec:org305f95f}
   Rarely is a measured qoi dimensionless; usually, a dimensionless qoi must be
derived via \emph{non-dimensionalization}. Given some dimensional qoi \(\qoi\), we form
the \emph{non-dimensionalizing factor} via a product of the physical input factors
\(\exp(\vu^T\v\tar)=\prod_{i=1}^{\nvar}\var_i^{u_i}\), such that
\(\vd(\qoi)=\vd(\prod_{i=1}^{\nvar}\var_i^{u_i})\). We then form the dimensionless
qoi via

\begin{equation}
  \begin{aligned}
    \pi(\v\tar) &= \qoi(\v\tar) \prod_{i=1}^{\nvar}\var_i^{-u_i}, \\
        &= \qoi(\v\tar) \exp(-\vu^T\v\tar).
  \end{aligned}
\end{equation}

\noindent One may determine an acceptable non-dimensionalizing factor by solving
the linear system

\begin{equation}\label{eq:dim-match-vec}
  \vd(\qoi)=\mD\vu.
\end{equation}

\noindent If \eqref{eq:dim-match-vec} possesses no solution, then no
dimensionally homogeneous relationship among the qoi and proposed variables
exists. Since the physical world is required to be dimensionally homogeneous, we
conclude that lurking variables must affect the qoi. Bridgman
\cite[Ch. 1]{bridgman1922dimensional} illustrates this point through various
examples.

More commonly, \Cref{eq:dim-match-vec} will possess infinite solutions; however,
under sensible conditions (Sec. \ref{sec:org8e1bcfd}), there
exists a unique \(\vu^*\) that is orthogonal to the pi subspace, i.e.
\(\mV^T\vu^*=0\). This is useful in the case where we cannot measure \(\pi\)
directly, but must instead observe \(\qoi_{\text{obs}}\); the physical qoi subject
to some added noise \(\epsilon\). In Box's \cite{box1966} formulation, this
\(\epsilon\) represents additional, randomly fluctuating lurking variables. In
this case

\begin{equation}\label{eq:nondim}
  \begin{aligned}
    \qoi_{\text{obs}}(\v\tar) &= \qoi(\v\tar) + \epsilon, \\
    \pi_{\text{obs}}(\v\tar) &= \pi(\v\tar) + \epsilon\exp(-\vu^T\v\tar).
  \end{aligned}
\end{equation}

\noindent In principle, the orthogonality condition \(\mV^T\vu=0\) could aid in
separating signal from noise. However, we shall see that so long as the noise
term is unbiased, we may use the heteroskedastic form of \eqref{eq:nondim}.

\subsection{Pinned variables}
\label{sec:org9267596}
Above, we have implicitly assumed that the \(\v\tar_{\EX}\) are varied
experimentally. In some cases, a variable is known but intentionally not varied
by the experimentalist. We call these known but fixed quantities \emph{pinned
variables}. The decision to pin a variable may be due to cost or safety
constraints. With some modifications, Dimensional Analysis may still be employed
for lurking variable detection in this setting. For a pinned variable, we have
knowledge of its physical dimensions \(\mD_{\PI}\). We may split \(\v\tar\) in a
form similar to \eqref{eq:split}

\begin{equation}
  \v\tar^T = (\v\tar_{\EX}^T,\v\tar_{\LU}^T,\v\tar_{\PI}^T),
\end{equation}

\noindent note that \(\mD=[\mD_{\EX},\mD_{\LU},\mD_{\PI}]\), and write a relation
analogous to \eqref{eq:dim-match}

\begin{equation}\label{eq:pinned-eq}
  \mD_{\EX}\nabla_{\v\tar_{\EX}}\pi = -\mD_{\PI}\nabla_{\v\tar_{\PI}}\pi %
                                      -\mD_{\LU}\nabla_{\v\tar_{\LU}}\pi.
\end{equation}

\noindent If no lurking variables exist, then the quantity
\(\mD_{\EX}\nabla_{\v\tar_{\EX}}\pi\) is expected to lie in the range of
\(\mD_{\PI}\). One can use this information to construct a detection procedure, so
long as \(\cR(\mD_{\LU})\not\subseteq\cR(\mD_{\PI})\). In the case where
\(\cR(\mD_{\PI})=\R{\ndim}\), detecting lurking variables via dimensional analysis
using \eqref{eq:pinned-eq} is impossible, as
\(\cR(\mD_{\LU})\subseteq\cR(\mD_{\PI})\).

Suppose \(\cR(\mD_{\PI})\subset\R{\ndim}\) with
\(\text{Rank}(\mD_{\PI})=r_{\PI}<\ndim\), and let
\(\mW_{\PI}\in\R{\ndim\times(\ndim-r_{\PI})}\) be an orthonormal basis for the
orthogonal complement of \(\mD_{\PI}\); that is
\(\mW_{\PI}^T\mW_{\PI}=\mI_{(\ndim-\rnk_{\PI})\times(\ndim-\rnk_{\PI})}\) and
\(\mW^T_{\PI}\mD_{\PI}=0\). Then we have

\begin{equation}\label{eq:pinned-proj-eq}
  \mW_{\PI}^T\mD_{\EX}\nabla_{\v\tar_{\EX}}\pi = -\mW_{\PI}^T\mD_{\LU}\nabla_{\v\tar_{\LU}}\pi.
\end{equation}

\noindent So long as \(\mW_{\PI}^T\mD_{\LU}\neq0\), \Cref{eq:pinned-proj-eq} may
enable lurking variable detection. Note that if
\(\cR(\mD_{\PI})\notperp\cR(\mD_{\LU})\), then multiplying by \(\mW_{\PI}^T\) will
eliminate some information in \(\mD_{\LU}\nabla_{\v\tar_{\LU}}\pi\). Thus, while
detection is possible with \eqref{eq:pinned-proj-eq}, interpreting the
dimension vector requires more care.

\section{Constructing a Detection Procedure}
\label{sec:orgb99b5a2}
  This section details the requisite machinery for our experimental lurking
variable detection procedures, which ultimately consist of an experimental
design coupled with a hypothesis test. Stein's lemma guides the design and
enables the definition of our null and alternative hypotheses, while Hotelling's
\(T^2\) test provides a suitable statistic.

\subsection{Stein's Lemma}
\label{sec:org501de31}
   Computing the gradient is challenging in the context of a physical
experiment. The usual approximation techniques of finite differences can be
inappropriate in this setting, where experimental noise may dwarf perturbations
to the factors, and arbitrary setting of levels may be impossible. We do not
address the latter issue and assume continuous variables, but attack the former
by pursuing a different tack, that of approximating the average gradient
through experimental design.

Stein's lemma was originally derived in the context of mean estimation; in our
case, we will use it to derive an estimate of the average gradient from point
evaluations of the qoi.\cite{stein1981estimation,lehmann2006theory} If \(X\sim
N(\v\mu,\m\Sigma)\) where \(N(\v\mu,\m\Sigma)\) is a multivariate normal
distribution with mean \(\v\mu\) and invertible covariance matrix \(\m\Sigma\),
Stein's lemma states

\begin{equation}
  \E[\nabla_{\v\tar}f(X)] = \m\Sigma^{-1}\E[(X-\v\mu)f(X)].
\end{equation}

\noindent We will assume in what follows that the exposed parameters are
independent and free to be drawn according to \(\v\tar_{\EX}\sim
N(\v\mu_{\EX},\m\Sigma_{\EX})\), with \(\v\mu_{\EX}\) and \(\m\Sigma_{\EX}\)
invertible selected by the experimentalist, in order to study a desired range of
values. Furthermore, we assume our quantity of interest \(\qoi\) is dimensional,
and subject to \(\v\tar_{\EX}\) -independent, zero-mean noise \(\epsilon\). Taking
an expectation with respect to both \(\v\tar_{\EX}\) and \(\epsilon\), and applying
Stein's lemma to the second line of \eqref{eq:nondim} yields

\begin{equation}
  \begin{aligned}
  \E[\m\Sigma_{\EX}^{-1}(\v\tar_{\EX}-\v\mu_{\EX})\pi_{\text{obs}}(\v\tar_{\EX},\v\tar_{\LU})] &= %
  \E[\nabla_{\v\tar_{\EX}}\pi(\v\tar_{\EX},\v\tar_{\LU})] \\
  &+ \E[\nabla_{\v\tar_{\EX}}\exp(-\vw_{\EX}^T\v\tar_{\EX})\epsilon].
  \end{aligned}
\end{equation}

\noindent Note that the noise term vanishes due to its zero-mean.
Multiplying by the dimension matrix yields

\begin{equation}\label{eq:nondim_stein}
  \E[\mD_{\EX}\m\Sigma_{\EX}^{-1}(\v\tar_{\EX}-\v\mu_{\EX})\pi_{\text{obs}}(\v\tar_{\EX},\v\tar_{\LU})] = %
  \E[\mD_{\EX}\nabla_{\v\tar_{\EX}}\pi(\v\tar_{\EX},\v\tar_{\LU})].
\end{equation}

\noindent \Cref{eq:nondim_stein} enables the definition of an
appropriate null hypothesis \(H_0\) for lurking variable testing. We denote by
\(H_0\) the null hypothesis of no dimensional lurking variables, and by \(H_1\) the
alternative of present lurking variables. These hypotheses are defined by

\begin{equation}\label{eq:hyp}
  \begin{aligned}
    H_0: \E[\mD_{\EX}\m\Sigma_{\EX}^{-1}(\v\tar_{\EX}-\v\mu_{\EX})\pi_{\text{obs}}(\v\tar_{\EX},\v\tar_{\LU})] = 0, \\
    H_1: \E[\mD_{\EX}\m\Sigma_{\EX}^{-1}(\v\tar_{\EX}-\v\mu_{\EX})\pi_{\text{obs}}(\v\tar_{\EX},\v\tar_{\LU})] = \v\nu. \\
  \end{aligned}
\end{equation}

\noindent  Our procedure for dimensional lurking
variables is built upon a test for zero mean of a multivariate distribution. To
define a test statistic, we turn to Hotelling's \(T^2\) test.

\subsection{Hotelling's T-squared test}
\label{sec:orgb65b5b0}
   Hotelling's \(T^2\) test is a classical multivariate generalization of the
t-test.\cite{anderson1958introduction} We draw \(n>d\) samples according to the
design \(\v\tar_{\EX,i}\sim N(\v\mu_{\EX},\m\Sigma_{\EX})\) with \(i=1,\dots,n\),
and based on evaluations

\begin{equation}\label{eq:evaluations}
\vg_i = \mD_{\EX}\m\Sigma_{\EX}^{-1}(\v\tar_{\EX,i}-\v\mu_{\EX})\pi_{\text{obs}}(\v\tar_{\EX,i},\v\tar_{\LU})\in\R{\ndim},
\end{equation}

\noindent define our \(t\) statistic via

\begin{equation}\label{eq:t-stat}
  t^2 = n\bar{\vg}^T\hat{\mS}^{-1}\bar{\vg},
\end{equation}

\noindent where \(\bar{\vg}\) is the sample mean of the \(\vg_i\), and \(\hat{\mS}\)
is the sample covariance

\begin{equation}
  \hat{\mS} = \frac{1}{n-1}\sum_{i=1}^n(\vg_i-\bar{\vg})(\vg_i-\bar{\vg})^T.
\end{equation}

\noindent Under a normal \(\vg_i\) assumption and \(H_0\), \Cref{eq:t-stat}
follows the distribution

\begin{equation}\label{eq:t-dist-null}
  t^2\sim T^2_{\ndim,n-1} = \frac{\ndim(n-1)}{n-\ndim}F_{\ndim,n-\ndim},
\end{equation}

\noindent where \(F_{\ndim,n-\ndim}\) is the central F-distribution with
parameters \(\ndim\) and \(n-\ndim\). Note that even with \(\v\tar_{\EX}\) and
\(\epsilon\) normal, the \(\vg_i\) are not necessarily normal, as \(\qoi(\v\tar)\) may
be an arbitrary function. While the derivation of the reference distribution
\eqref{eq:t-dist-null} requires normality, it is well-known that the
associated test is robust to departures from this
assumption.\cite{kariya1981,mardia1975} Given this robustness,
\eqref{eq:t-dist-null} allows effective testing of \(H_0\). This claim will be
further substantiated in Section \ref{sec:orgbc56bcc} below.

Suppose we are testing at the \(\alpha\) level; let \(F_{\ndim,n-\ndim}^c(\alpha)\)
be the critical test value based on the inverse CDF for \(F_{\ndim,n-\ndim}\). We
reject \(H_0\) if

\begin{equation}
  t^2 \geq \frac{\ndim(n-1)}{n-\ndim}F_{\ndim,n-\ndim}^c(\alpha).
\end{equation}

\subsection{Factors affecting power}
\label{sec:orgd0a7274}
From \eqref{eq:dim-match} and \eqref{eq:nondim_stein}, we know \(\v\nu =
-\E[\mD_{\LU}\nabla_{\v\tar_{\LU}}\pi]\). Note that the first two moments of
\(\vg\) are

\begin{equation}\begin{aligned}\label{eq:g-moments}
  \E[\vg] &\equiv \v\nu = \E[\mD_{\EX}\nabla_{\v\tar_{\EX}}\pi], \\
  \V[\vg] &= \E[\vg\vg^T] - \v\nu\v\nu^T.
\end{aligned}\end{equation}

\noindent In the case where \(H_1\) holds, under a normal \(\vg\) we find that the
test statistic \eqref{eq:t-stat} instead
follows\cite{anderson1958introduction}

\begin{equation}\label{eq:t-dist-alt}
  t^2\sim \frac{\ndim(n-1)}{n-\ndim}F_{\ndim,n-\ndim}(\Delta),
\end{equation}

\noindent where \(F_{\ndim,n-\ndim}(\Delta)\) is a non-central F-distribution with
non-centrality parameter

\begin{equation}\label{eq:non-central}
  \Delta = n\E[\vg]^T\V[\vg]^{-1}\E[\vg],
\end{equation}

\noindent which implies \(\Delta\geq0\). The power \(P\) of our test is then given
by

\begin{equation}\label{eq:power}
  P \equiv \P\left[F_{\ndim,n-\ndim}(\Delta)\geq F_{\ndim,n-\ndim}^c(\alpha) \left| H_1\right.\right].
\end{equation}

\noindent \Cref{eq:power} implies that larger values of \(\Delta\) lead to
higher power. To better understand the contributions to \eqref{eq:non-central},
we apply the Sherman-Morrison formula to \(\V[\vg]\) to separate contributions of
\(\v\nu\) and other variance components. Doing so yields

\begin{equation}\label{eq:non-central-k}
  \Delta = n\left(k + \frac{k^2}{1-k}\right),
\end{equation}

\noindent where \(k = \v\nu^T\E[\vg\vg^T]^{-1}\v\nu\). It is easy to see that
\(k\in[0,1]\). Based on \eqref{eq:non-central-k}, we see that power is an
increasing function of \(n\) and \(k\). However, Taylor expanding \(\pi(\v\tar)\)
about \(\v\mu\) reveals that \(\v\nu\) and \(\E[\vg\vg^T]\) have different
dependencies on the moments of the sampling distribution, and therefore
different dependencies on \(\v\mu_{\EX},\m\Sigma_{\EX}\). Thus, the dependence of
power (via \(k\)) on the sampling distribution cannot be known without more
knowledge of the functional form of \(\pi(\v\tar)\). For instance, it is easy to
construct simple examples of \(\pi(\v\tar)\) which limit to either power of zero
or power of one while scaling \(\m\Sigma_{\EX}\) towards zero or infinity.

To see the impact of noise on power, we consider the average outer-product of
the test statistic, given by

\begin{equation}
  \E[\vg\vg^T] = \mD_{\EX}\m\Sigma_{\EX}^{-1}\E[(\v\tar_{\EX}-\v\mu_{\EX})(\v\tar_{\EX}-\v\mu_{\EX})^T(\pi^2+\exp(-2\vw_{\EX}^T\v\tar_{\EX})\tau^2)]\m\Sigma_{\EX}^{-1}\mD_{\EX}^T,
\end{equation}

\noindent where \(\tau^2\) is the variance of \(\epsilon\). Through multiple
applications of Jensen's inequality, we may show

\begin{equation}\label{eq:noise-term}
  \mD_{\EX}\m\Sigma_{\EX}^{-1}\E[(\v\tar_{\EX}-\v\mu_{\EX})(\v\tar_{\EX}-\v\mu_{\EX})^T\exp(-2\vw^T_{\EX}\v\tar_{\EX})\tau^2]\m\Sigma_{\EX}^{-1}\mD_{\EX}^T \geq (\mD_{\EX}\vw_{\EX})(\mD_{\EX}\vw_{\EX})^T\tau^2/\bar{Q}^2 \geq 0,
\end{equation}

\noindent where \(\bar{Q}\equiv\exp(-\vw_{\EX}^T\v\mu)\). Since \(\E[\vg\vg^T]\)
enters as an inverse, \Cref{eq:noise-term} shows that the noise variability
\(\tau\) negatively affects power, as is intuitively expected. However, since this
term is added with the qoi-dependent term, one cannot know an appropriate scale
for \(\tau\) without more knowledge of the structure of \(\pi(\v\tar)\). We will see
below that the effects of noise can be similar across disparate values of
\(\tau/\bar{Q}\).

\section{Detecting Lurking Variables}
\label{sec:org26bf4ba}
  In this section, we present procedures for lurking variable detection. The
first procedure is based entirely upon a priori information. The second
combines Dimensional Analysis (Sec. \ref{sec:org6f9d706}) with the statistical
machinery introduced above (Sec. \ref{sec:orgb99b5a2}). The final
procedure is a modification of the second, introduced to handle pinned
variables. Examples of these procedures are presented below, in Section \ref{sec:orgbc56bcc}.

\subsection{Detection with a priori information}
\label{sec:org01d2a4a}
   In some cases, the analyst may detect lurking variables based solely on a
priori information. This can be done with a simple analytic check for
dimensional homogeneity. In the case where no non-dimensionalizing factor can be
defined; that is \(\vd(\qoi)\not\in\cR(\mD_{\EX})\), dimensional homogeneity
cannot hold. This is a clear signal that lurking variables affect our qoi.
\subsection{Detection with a physical qoi}
\label{sec:org93541d4}
   This subsection lays out a three-step procedure to test for lurking
variables. Note that in what follows, the lurking variables need not vary, and
are assumed to be fixed throughout the experiment.

This procedure assumes the following setting: Let \(\qoi\) be a physical quantity
of interest, with identified predictors \(\v\tar_{\EX}\in\R{\nvar_{\EX}}\) and
physical dimensions \(\mD_{\EX}\in\R{\ndim\times \nvar_{\EX}}\). Assume
\(\vd(\qoi)\in\cR(\mD_{\EX})\), and that one may vary the \(\v\tar_{\EX}\) within
acceptable bounds and evaluate \(\qoi_{\text{obs}}(\v\tar_{\EX})\) via the
experimental setup.

\bigskip
\noindent \textbf{Step 1: Perform dimensional analysis}

Solve \(\vd(\qoi)=\mD_{\EX}\vw_{\EX}\) for \(\vw_{\EX}\in\R{\nvar_{\EX}}\); this
enables computation of

$$\pi_{\text{obs}}(\v\tar_{\EX}) = \qoi_{\text{obs}}(\v\tar_{\EX})\exp(-\vw_{\EX}^T\v\tar_{\EX}).$$

\noindent \textbf{Step 2: Design and perform experiment}

Choose \(\v\mu_{\EX}\in\R{\ndim_{\EX}}\) and
\(\m\Sigma_{\EX}\in\R{\nvar_{\EX}\times \nvar_{\EX}}\) in order to select a range
of values for study. Draw samples \(\v\tar_{\EX}\sim
N(\v\mu_{\EX},\m\Sigma_{\EX})\) with \(i=1,\dots,n\). Evaluate
\(\qoi_{\text{obs}}(\v\tar_{\EX})\) via the experimental setup, and use
\(\vw_{\EX}\) from Step 1 to compute \(\pi_{\text{obs}}(\v\tar_{\EX})\).

\noindent \textbf{Step 3: Test}

Select a confidence level, form the \(\vg_i\) defined by \eqref{eq:evaluations},
compute \(t^2\) via \eqref{eq:t-stat}, and compare against the reference
distribution defined by \eqref{eq:t-dist-null}. If \(t^2\) is larger than
the critical value, reject the null hypothesis of no lurking variables.
\subsection{Detection in the presence of pinned variables}
\label{sec:org0982799}
   If a pinned variable affects our qoi, the detection procedure defined above
is inappropriate. In this case, we recommend a simple modification to the
procedure above, based on \eqref{eq:pinned-proj-eq}. In what follows, we
assume the same setting as Section \ref{sec:org93541d4}, with the
addition of \(\v\tar_{\PI}\in\R{\nvar_{\PI}}\) pinned variables, with known
dimensions \(\mD_{\PI}\in\R{\ndim\times \nvar_{\PI}}\). These \(\v\tar_{\PI}\)
remain fixed throughout the experiment. As mentioned in Section \ref{sec:org9267596}, interpretation of the dimension vector requires more care.

\bigskip
\noindent \textbf{Step 1: Perform dimensional analysis}

Determine \(r_{\PI}=\text{Rank}(\mD_{\PI})\), and check that \(r_{\PI}<\ndim\). If
so, compute a basis \(\mW_{\PI}\in\R{\ndim\times(\ndim-\rnk_{\PI})}\) for the
orthogonal complement of \(\mD_{\PI}\), e.g. via a QR decomposition.

\noindent \textbf{Step 2: Design and perform experiment}

This step remains unchanged; note that \(\v\tar_{\EX}\) refers only to those
variables which can be varied.

\noindent \textbf{Step 3: Test}

Follow Step 3 above, but modify the \(\vg_i\) as defined by \eqref{eq:evaluations}

\begin{equation}\label{eq:evaluations-proj}
  \vg'_i = \mW_{\PI}^T\vg_i.
\end{equation}

\noindent Compute \(t^2\) via \eqref{eq:t-stat} based on the \(\vg'_i\); note
that \(\ndim\) is replaced with \(\ndim-\rnk_{\PI}\) for this modified statistic. If
\(t^2\geq
\frac{(\ndim-r_{\PI})(n-1)}{n-\ndim+r_{\PI}}F_{\nvar_{\PI},n-\ndim+r_{\PI}}^c(\alpha)\),
then reject \(H_0\).
\section{Detection Examples}
\label{sec:orgbc56bcc}
\subsection{Physical examples}
\label{sec:org9de8c1b}
   In what follows, we consider two physical examples motivated by engineering
applications: Rough Pipe Flow and Two-Fluid Flow. This section gives a short
description of the examples; full details, including derivations and code
sufficient to reproduce all results below, are available in the Supporting
Material. Rough Pipe Flow has already been introduced above in Section
\ref{sec:org70885ee}; the full list of the predictors, the response, and
associated physical dimensions is detailed in Table \ref{tab:rough-vars}. For
Rough Pipe Flow, evaluation of the qoi is based on analytic and empirical
relationships: Poiseuille's law for laminar flow, and the Colebrook equation to
model behavior from the turbulent onset to full
turbulence.\cite{white2011,colebrook1937}

\begin{table}[!ht]
\centering
  \begin{tabular}{@{}lcc@{}}
  \toprule
  Physical Variable       & Symbol               & Physical Dimensions\\
  \midrule
  Pressure Gradient (qoi) & $\frac{\Delta P}{L}$ & $M^1L^{-1}T^{-2}$ \\
  Pipe Diameter           & $d_P$                & $L$ \\
  Pipe Roughness          & $\epsilon_P$         & $L$ \\
  Fluid Bulk Velocity     & $U_F$                & $L^1T^{-1}$ \\
  Fluid Density           & $\rho_F$             & $M^1L^{-3}$ \\
  Fluid Viscosity         & $\mu_F$              & $M^1L^{-1}T^{-1}$ \\
  \bottomrule
  \end{tabular}
\caption{Physical variables for Rough Pipe Flow.}
\label{tab:rough-vars}
\end{table}

The second example is Two-Fluid Flow, inspired by an engineering need to pump
viscous fluids at a high rate. In this setting, two immiscible fluids are
assumed to be in steady laminar flow through a channel. The fluids form layers
depicted in Figure \ref{fig:two-fluid-schematic}, where the inner fluid is
assumed to be more viscous than the outer fluid (\(\mu_i>>\mu_o\)). This outer
lubricating layer allows faster transport of the inner fluid, but reduces the
effective diameter for pumping. The qoi is the volumetric flow rate of the inner
flow. In this example, evaluation of the qoi is based on an analytic expression,
derived from the Navier-Stokes equations with some standard simplifying
assumptions.\cite{white2011}

\begin{figure}[!ht]
\centering
\includegraphics[width=0.75\textwidth]{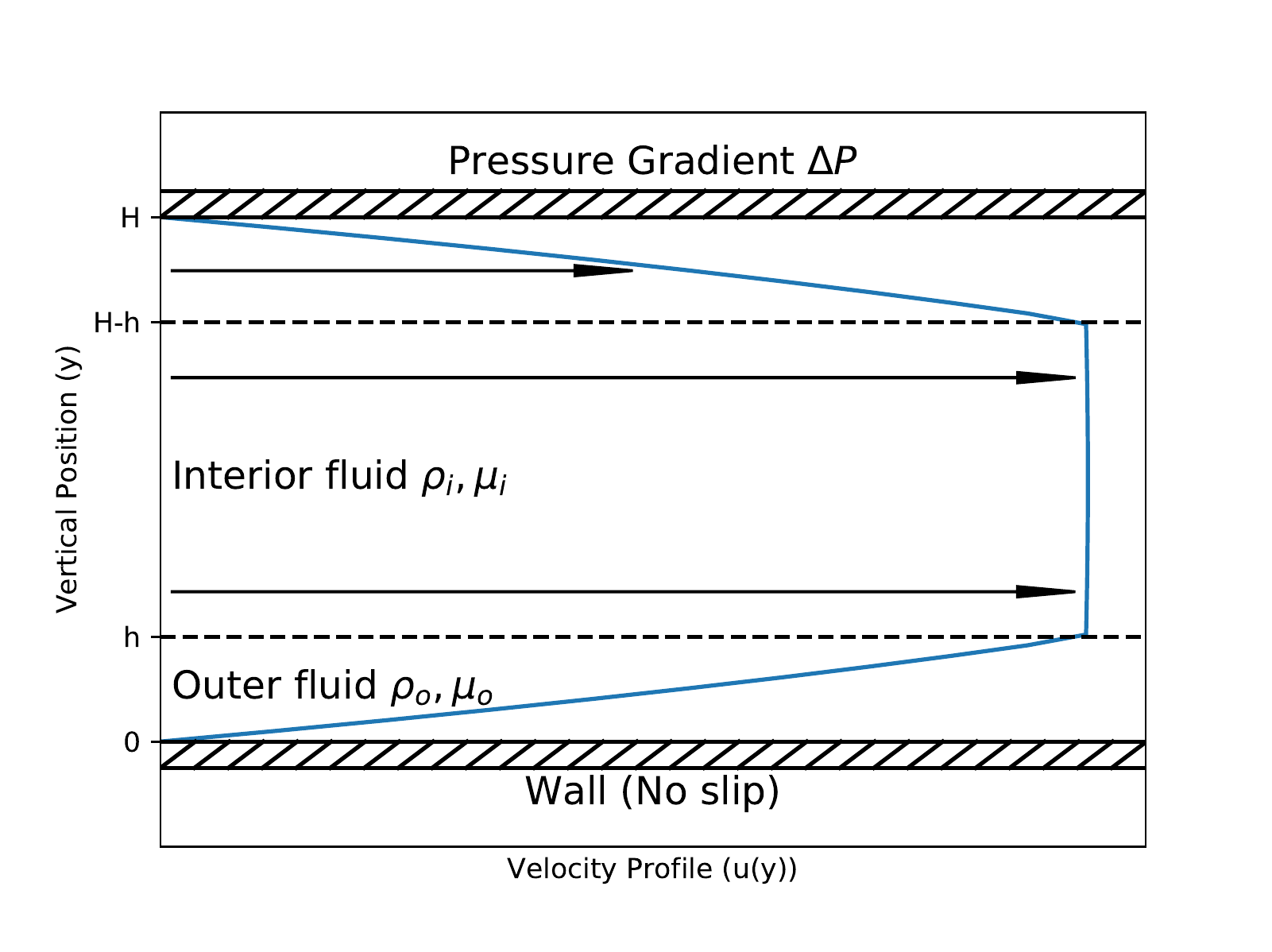}
\caption{Schematic for Two-Fluid Flow. The view is of a cross-section
  of an infinite channel formed by two parallel plates. A viscous fluid flows
  in the cavity between these two surfaces. Fluid flows from left to right,
  and the velocity profile (depicted in blue) across the channel is shown;
  zero velocity corresponds to the left boundary of the figure, while positions
  on the curve further to the right correspond to greater velocity. The dashed
  horizontal lines illustrate the boundary between the inner and outer fluids;
  the qoi is the flow rate between the two dashed lines. By design, the flow
  rate is nearly independent of the inner flow fluid properties.}
\label{fig:two-fluid-schematic}
\end{figure}

In this example, the qoi is significantly less sensitive to the inner fluid
properties than the other factors. This is by design. Since our lurking variable
detection procedure is based on the sensitivity of the qoi to the lurking
variable in question, detecting a lurking inner fluid property is challenging.
Note also that formally, the dimensional qoi is not sensitive to the fluid
densities. These predictors are included to demonstrate that our procedure
handles such unimportant variables automatically.

\begin{table}[!ht]
\centering
  \begin{tabular}{@{}lcc@{}}
  \toprule
  Physical Variable         & Symbol     & Physical Dimensions\\
  \midrule
  Flow Rate (qoi)           & $\qoi$     & $L^2T^{-1}$ \\
  Applied Pressure Gradient & $\nabla P$ & $M^1L^{-2}T^{-2}$ \\
  Outer Fluid Thickness     & $h$        & $L$ \\
  Inner Fluid Thickness     & $H$        & $L$ \\
  Outer Fluid Viscosity     & $\mu_o$    & $M^1L^{-1}T^{-1}$ \\
  Inner Fluid Viscosity     & $\mu_i$    & $M^1L^{-1}T^{-1}$ \\
  Outer Fluid Density       & $\rho_o$   & $M^1L^{-3}$ \\
  Inner Fluid Density       & $\rho_i$   & $M^1L^{-3}$ \\
  \bottomrule
  \end{tabular}
\caption{Physical variables for Two-Fluid Flow}
\label{tab:two-vars}
\end{table}

\subsection{Analytic detection}
\label{sec:org890b95a}
As a concrete example of the analytic detection procedure, consider the problem
of Rough Pipe Flow. Suppose an analyst believes that the pipe diameter \(d_P\) and
fluid bulk velocity \(U_F\) are the only factors which affect the qoi. In this
case, the reduced dimension matrix is given in Table
\ref{tab:pipe-reduced-success}.

\begin{table}[!ht]
  \centering
  \begin{tabular}{@{}lccc}
    \toprule
    Dimension  & $d_P$ & $U_F$ & $\Delta P$ \\
    \midrule
    Mass (M)   &   0      &  0    &  1  \\
    Length (L) &   1      &  1    & \mm1 \\
    Time (T)   &   0      & \mm1  & \mm2 \\
    \bottomrule
  \end{tabular}
  \caption{Reduced dimension matrix for Rough Pipe Flow. The center columns are $\mD_{\EX}$, while
the rightmost column is $\vd(\qoi)$. In this case, it is evident that dimensional homogeneity
cannot hold, and a lurking variable must exist.}
  \label{tab:pipe-reduced-success}
\end{table}

\noindent By inspection, we can see \(\vd(\qoi)\not\in\cR(\mD_{\EX})\). One cannot
form a non-dimensionalizing factor from the given predictors; clearly something
is missing. Note that none of the proposed predictors has physical dimensions of
Mass; this hints that the lurking variables must introduce a Mass.

The analysis above can be performed without experimentation, but it is extremely
limited. Suppose that an analyst instead proposed predictors of fluid density
\(\rho_F\) and bulk velocity \(U_F\). Then the reduced dimension matrix is given in
Table \ref{tab:pipe-reduced-failure}. In this case, dimensional homogeneity
holds, and the analyst can form a non-dimensionalizing factor; the \emph{dynamic
pressure} \(\f12\rho_FU_F^2\).\cite{white2011} To learn more, the analyst must turn
to fluid mechanics; either analytic or experimental.

\begin{table}[!ht]
  \centering
  \begin{tabular}{@{}lccc@{}}
    \toprule
    Dimension  & $\rho_F$ & $U_F$ & $\Delta P$ \\
    \midrule
    Mass (M)   &   1      &  0    &  1 \\
    Length (L) & \mm3     &  1    & -1 \\
    Time (T)   &   0      & \mm1  & -2 \\
    \bottomrule
  \end{tabular}
  \caption{Reduced dimension matrix for Rough Pipe Flow. The center columns are $\mD_{\EX}$, while
the rightmost column is $\vd(\qoi)$. In this case, one can form the \emph{dynamic pressure}
 $\f12\rho_F U_F^2$, which is a suitable non-dimensionalizing factor for the qoi. To learn
more, an experimentalist could collect data to probe the the functional relationship between
the response and predictors. This would reveal additional variability not predicted by
naive Dimensional Analysis.}
  \label{tab:pipe-reduced-failure}
\end{table}

\subsection{Experimental detection}
\label{sec:orgd31fd60}
In this section, we perform numerical experiments to test the assumptions
introduced in Section \ref{sec:org26bf4ba} above, and assess the efficacy
of the proposed detection procedures. In the following examples, we query an R
implementation of the models above to perform virtual
experiments.\cite{r2017,wickham2009,venables2002,sundar2014,csardi2016} To mimic
experimental variability, we add zero-mean Gaussian noise to \(\qoi\) with a
chosen standard deviation \(\tau\); we increase \(\tau\) in each case until a
substantial degradation in power is observed. We simulate lurking variables by
choosing a subset of the variables and fixing them during the experiment. We
present various cases of lurking or pinned variables, in order to demonstrate
both the ordinary and modified detection procedures.

In all cases, we compute a non-dimensionalizing factor from the exposed
variables according to Appendix \ref{sec:org8e1bcfd}. We present
sweeps through samples drawn and noise variability, with \(N=5000\) replications
at each setting to estimate the Type I error and power of the detection
procedures. A significance level of \(\alpha=0.05\) is used for all examples.

We estimate both Type I error and power as binomial parameters. Type I error is
simulated by considering the case when there are no lurking variables. Power is
estimated by fixing and withholding variables from the analysis to simulate
lurking variables. Since we consider cases where the estimates approach the
extremes of the unit interval, the simple normal approximation is inappropriate
for our purposes. Below we construct intervals with coverage probability \(95\%\)
using Wilson's method, with bounds given by

\begin{equation}
\frac{1}{N+z^2}\left[N_r+\f12z^2\pm z\sqrt{\frac{N_rN_f}{N}+\f14z^2}\right],
\end{equation}

\noindent where \(N_r,N_f\) are the number of replications where we (respectively)
reject or fail to reject, and \(z\) is the \((1-0.95)/2\) quantile of the standard
normal.\cite{wilson1927probable}

\subsubsection{Rough Pipe Flow}
\label{sec:org47db551}
The parameters (Tab. \ref{tab:rough-pipe-param}) of the sampling distribution
\(\v\mu_{\EX},\m\Sigma_{\EX}\) are chosen to emphasize turbulent flow. In this
regime, the roughness of the pipe affects the qoi. Figure \ref{fig:rough-pipe}
presents Type I error and power curves with \(95\%\) confidence intervals. These
results demonstrate Type I error near the requested level, and increasing
detection power with increased sample size. As expected, greater noise
variability leads to less power.

\begin{table}[H]
\centering
\begin{tabular}{@{}llllll@{}}
\toprule
          & $\log(\rho_F)$ & $\log(U_F)$ & $\log(d_P)$ & $\log(\mu_F)$ & $\log(\epsilon_P)$ \\
\midrule
 $\v\mu_{\EX}$    & $0.1682$ & $5.7565$ & $0.3965$ & $\mm11.3102$ & $\mm2.0999$ \\
 $\m\Sigma_{\EX}^{1/2}$ & $0.0561$ & $0.3838$ & $0.0448$ & $0.0676$ & $0.0676$ \\
\bottomrule
\end{tabular}
\caption{Sampling parameters for Rough Pipe Flow. The diagonal variance matrix $\m\Sigma_{\EX}$
  is determined by the given standard deviation components. Logarithms are taken
  using base $10$.}
\label{tab:rough-pipe-param}
\end{table}

\begin{figure}[H]
\begin{minipage}{0.49\textwidth}
  \includegraphics[width=0.95\textwidth]{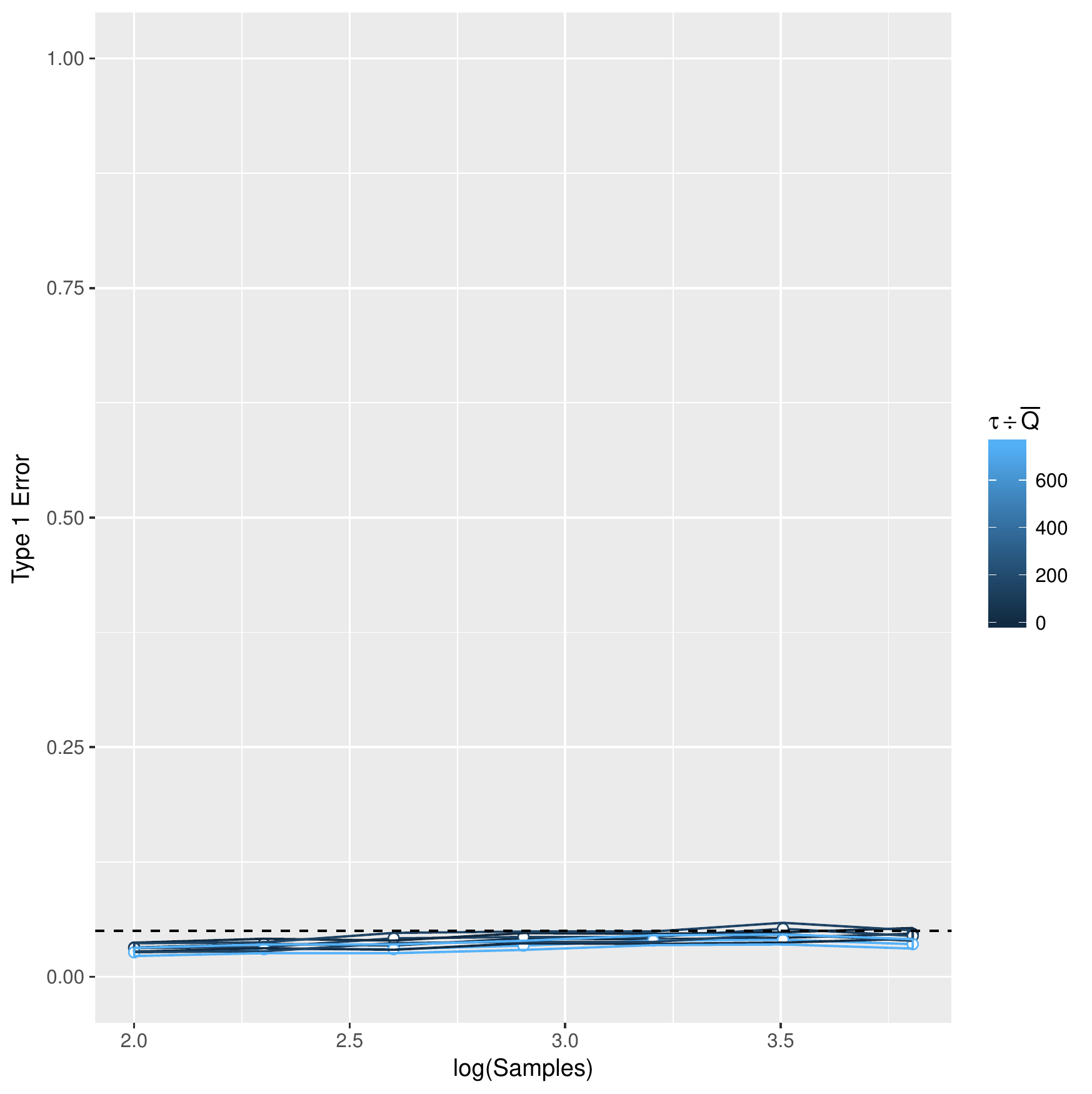}
\end{minipage} %
\begin{minipage}{0.49\textwidth}
  \includegraphics[width=0.95\textwidth]{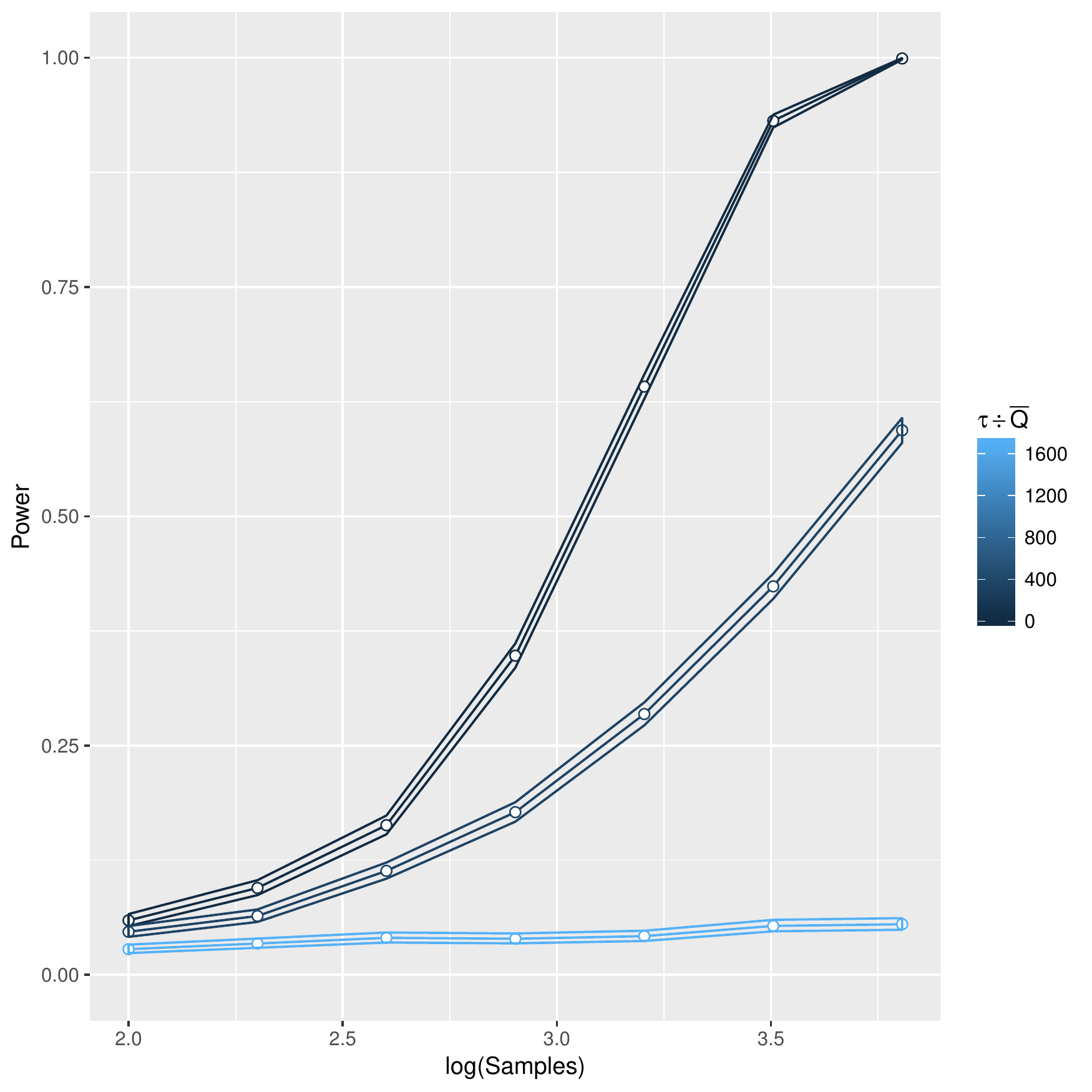}
\end{minipage}
\caption{Rough Pipe Flow Type I error (left) and power (right). The left image
  demonstrates error near the requested level. The right image considers a case
  where the roughness $\epsilon_P$ is a lurking variable. Note that in this
  experiment, the roughness is held fixed at a nominal value, mimicking the
  nature of Reynold's original 1883 experiment. The results shown here suggest
  that an experimentalist could have identified the presence of a lurking variable
  using a statistical procedure informed by Dimensional Analysis, rather than
  employing domain-specific knowledge.}
\label{fig:rough-pipe}
\end{figure}

\newpage
Table \ref{tab:rough-pipe-table} presents moment estimates for the distribution
of p-values in the null-following case, at various settings of \(n\) and \(\tau\).
For an exact reference distribution under the null hypothesis, the p-values
follow the uniform distribution on \([0,1]\), which has mean and variance \(0.5,
1/12\approx0.083\) respectively. The results are compatible with a uniform
distribution of p-values across a wide range of \(n,\tau\), endorsing the
assumptions used to derive the reference distribution. Figure
\ref{fig:org212bb31} depicts the empirical distribution of p-values at
\(n=6400,\tau=100\), enabling a more detailed assessment of our assumptions.

\begin{figure}[htbp]
\centering
\includegraphics[width=0.45\textwidth]{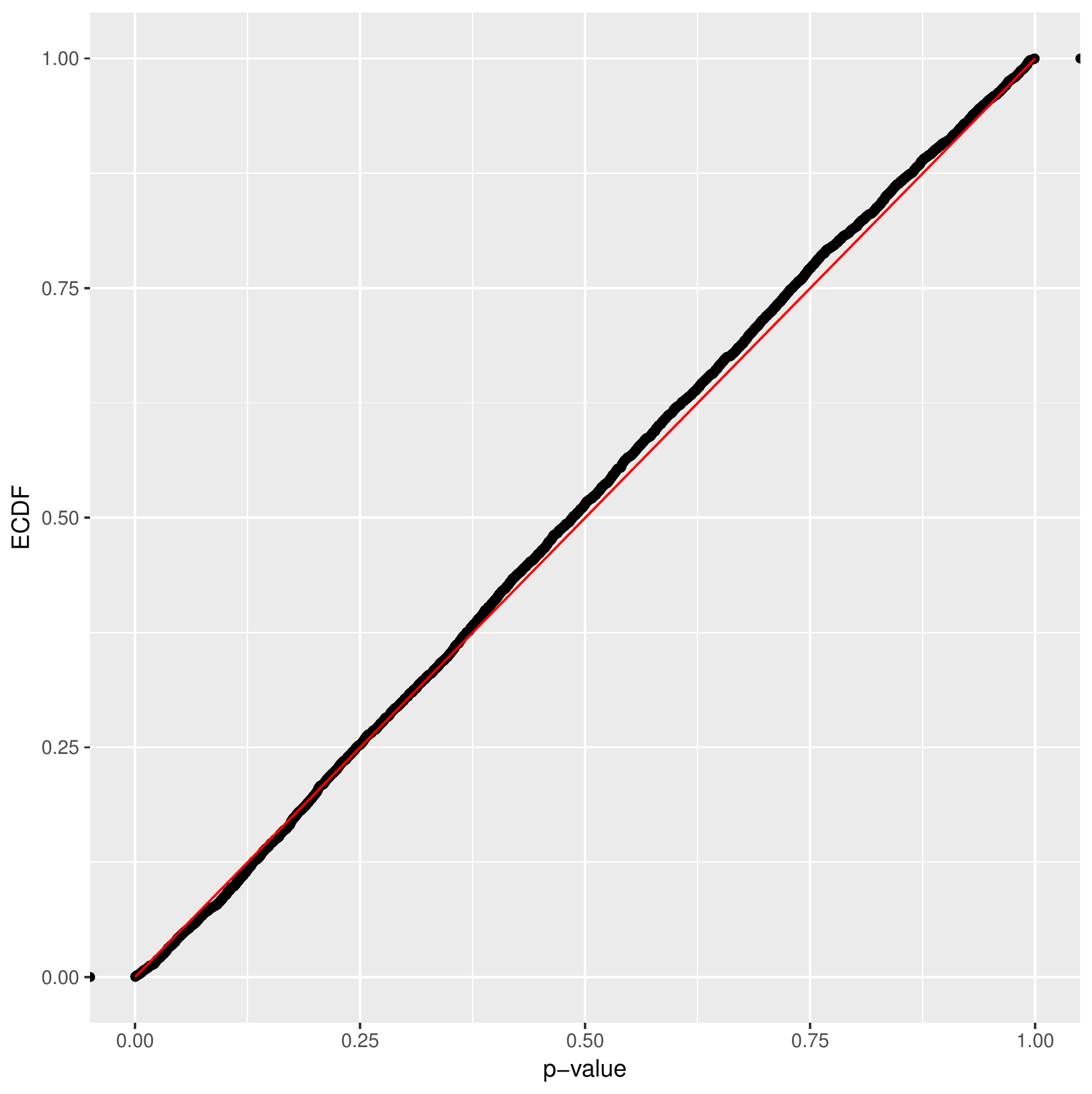}
\caption{\label{fig:org212bb31}
Rough Pipe Flow p-value empirical CDF with no lurking variables, \(n=6400,\tau=100\). Sorted p-values are denoted by dots, with an added red diagonal. The distribution of p-values is approximately uniform, endorsing the choice of reference distribution.}
\end{figure}

\begin{table}[H]
\centering
\begin{tabular}{@{}llll@{}}
\toprule
 $n$ & \multicolumn{3}{c}{$\tau$} \\
 \cmidrule(lr){2-4}
 & $0.00$ & $100.00$ & $500.00$\\
\midrule
$100$  & $0.49$, $0.0750$ & $0.48$, $0.0736$ & $0.47$, $0.0693$\\
$200$  & $0.46$, $0.0747$ & $0.47$, $0.0757$ & $0.48$, $0.0731$\\
$400$  & $0.49$, $0.0791$ & $0.47$, $0.0782$ & $0.47$, $0.0710$\\
$800$  & $0.48$, $0.0779$ & $0.48$, $0.0813$ & $0.47$, $0.0715$\\
$1600$ & $0.49$, $0.0853$ & $0.50$, $0.0788$ & $0.48$, $0.0756$\\
$3200$ & $0.48$, $0.0830$ & $0.50$, $0.0814$ & $0.48$, $0.0757$\\
$6400$ & $0.51$, $0.0828$ & $0.49$, $0.0801$ & $0.49$, $0.0794$\\
\bottomrule
\end{tabular}

\caption{Rough Pipe Flow moment estimates of p-value distribution with no lurking variables.
  Results are presented as pairs of mean, variance. Note that $U(0,1)$ has first moments
  $0.5$ and $1/12\approx0.083$. These results suggest that under these conditions, the
  empirical distribution of p-values is approximately uniform, with the greatest deviations
  occurring at low sample count and high noise variability.}
\label{tab:rough-pipe-table}
\end{table}

\subsubsection{Two-Fluid Flow}
\label{sec:org19150c3}
    The parameters of the sampling distribution \(\v\mu_{\EX},\m\Sigma_{\EX}\) are
chosen to emphasize a large viscosity ratio \(\mu_i>>\mu_o\) and a reasonable
range for the other design parameters. We alternately consider the flow
variables \(\rho,\mu\) as lurking, switching between the inner and outer pairs.
Figures \ref{fig:two-fluid-1} and \ref{fig:two-fluid-2} present Type I error and
power curves. The right image in Figure \ref{fig:two-fluid-1} demonstrates power
indistinguishable from \(\alpha=0.05\) at the studied sample counts \(n\); this is
the case where the inner flow variables are lurking. As noted above, the qoi
only weakly depends on these variables, as per engineering design. This example
demonstrates that some lurking variables are inherently challenging to detect.

\begin{table}[H]
\centering
\begin{tabular}{@{}llllllll@{}}
\toprule
          & $\log(\nabla P)$ & $\log(h)$ & $\log(H)$ & $\log(\mu_o)$ & $\log(\mu_i)$ & $\log(\rho_o)$ & $\log(\rho_i)$ \\
\midrule
 $\v\mu_{\EX}$    & $1.0397$ & $\mm1.7533$ & $0.3466$ & $0.3466$ & $3.8005$ & $0.3466$ & $1.4979$ \\
 $\m\Sigma_{\EX}^{1/2}$ & $0.3466$ & $0.1831$ & $0.1155$ & $0.1155$ & $0.0372$ & $0.1155$ & $0.0372$ \\
\bottomrule
\end{tabular}
\caption{Sampling parameters for Two Fluid Flow. The diagonal variance matrix $\m\Sigma_{\EX}$
  is determined by the given standard deviation components.}
\label{tab:two-fluid-param}
\end{table}

\begin{figure}[!ht]
\begin{minipage}{0.49\textwidth}
  \includegraphics[width=0.95\textwidth]{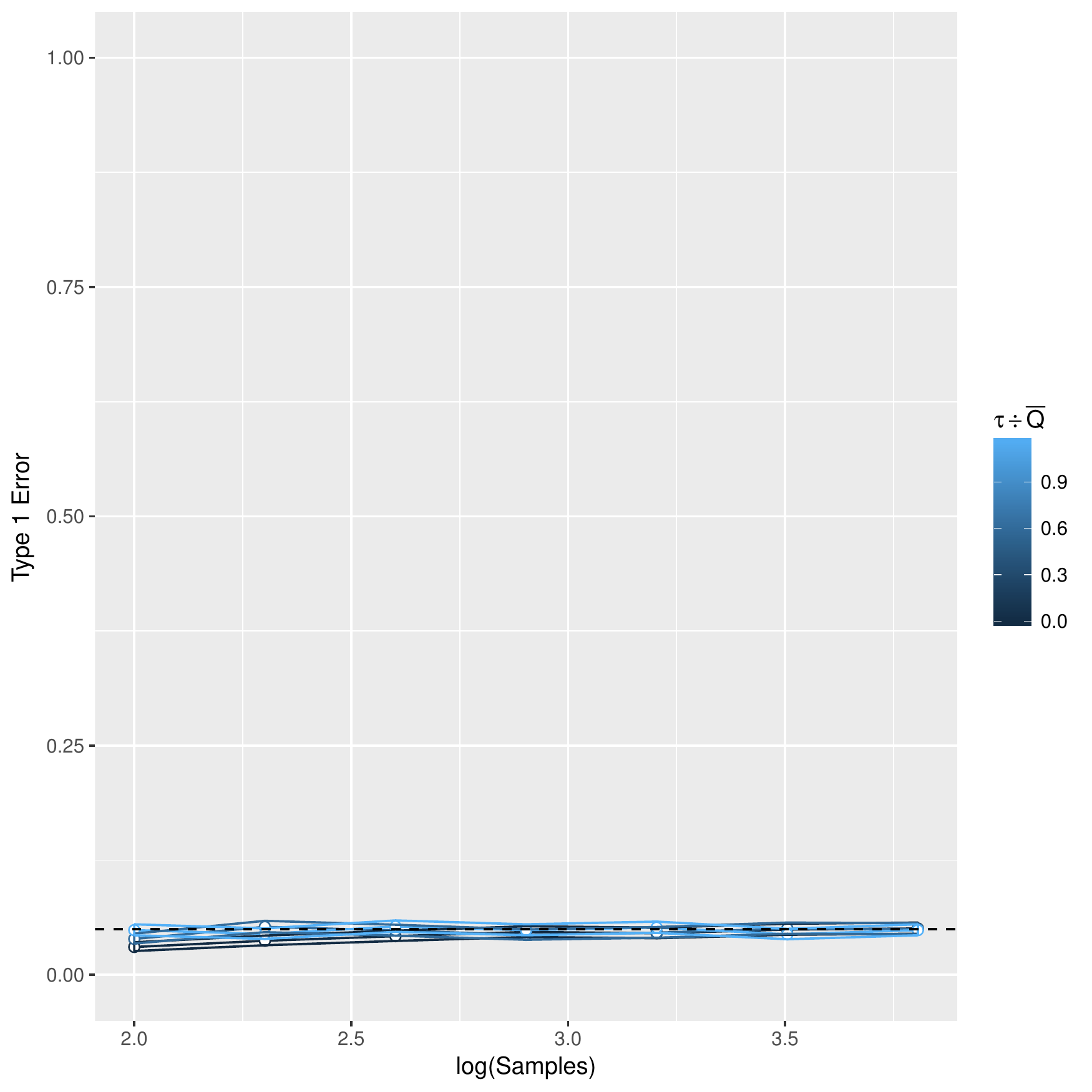}
\end{minipage} %
\begin{minipage}{0.49\textwidth}
  \includegraphics[width=0.95\textwidth]{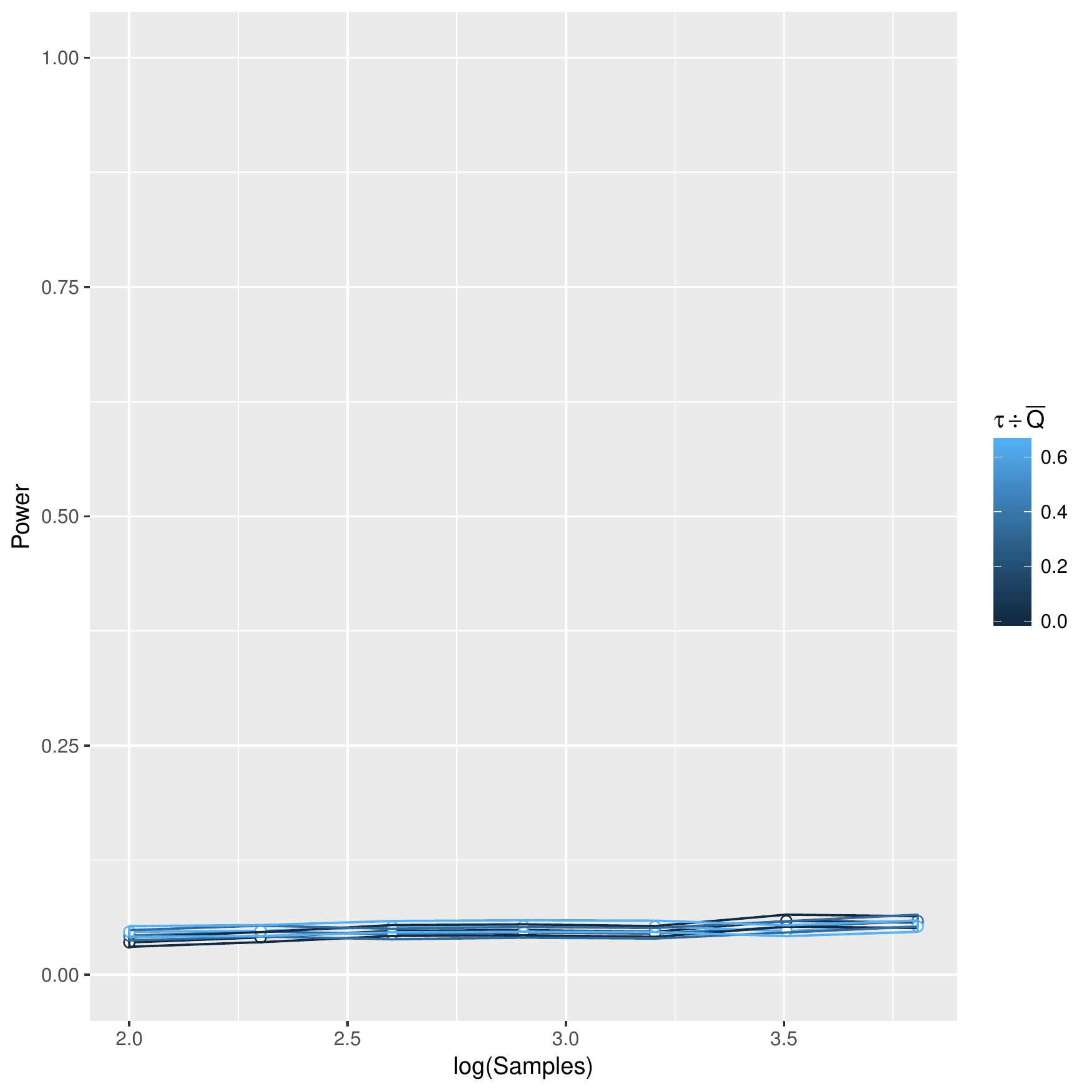}
\end{minipage}
\caption{Two-Fluid Flow Type I error (left) and power (right). The right image considers
  a case where the inner viscosity is lurking. This test case demonstrates that
  some lurking variable are inherently challenging to detect.}
\label{fig:two-fluid-1}
\end{figure}

Figure \ref{fig:two-fluid-2} considers cases where the outer viscosity is
lurking. The right image considers the inner thickness as a pinned variable,
while the left image varies all the exposed variables. Note that the right image
necessitates the modified procedure to address the pinned variable. Figure
\ref{fig:two-fluid-2} demonstrates that for Two-Fluid Flow and this particular
combination of variables, the presence of a pinned variable does not result in a
significant power loss. Table \ref{tab:two-fluid-moments} presents moment
estimates for the distribution of p-values in the case with no lurking
variables.

\begin{figure}[!ht]
\begin{minipage}{0.49\textwidth}
  \includegraphics[width=0.95\textwidth]{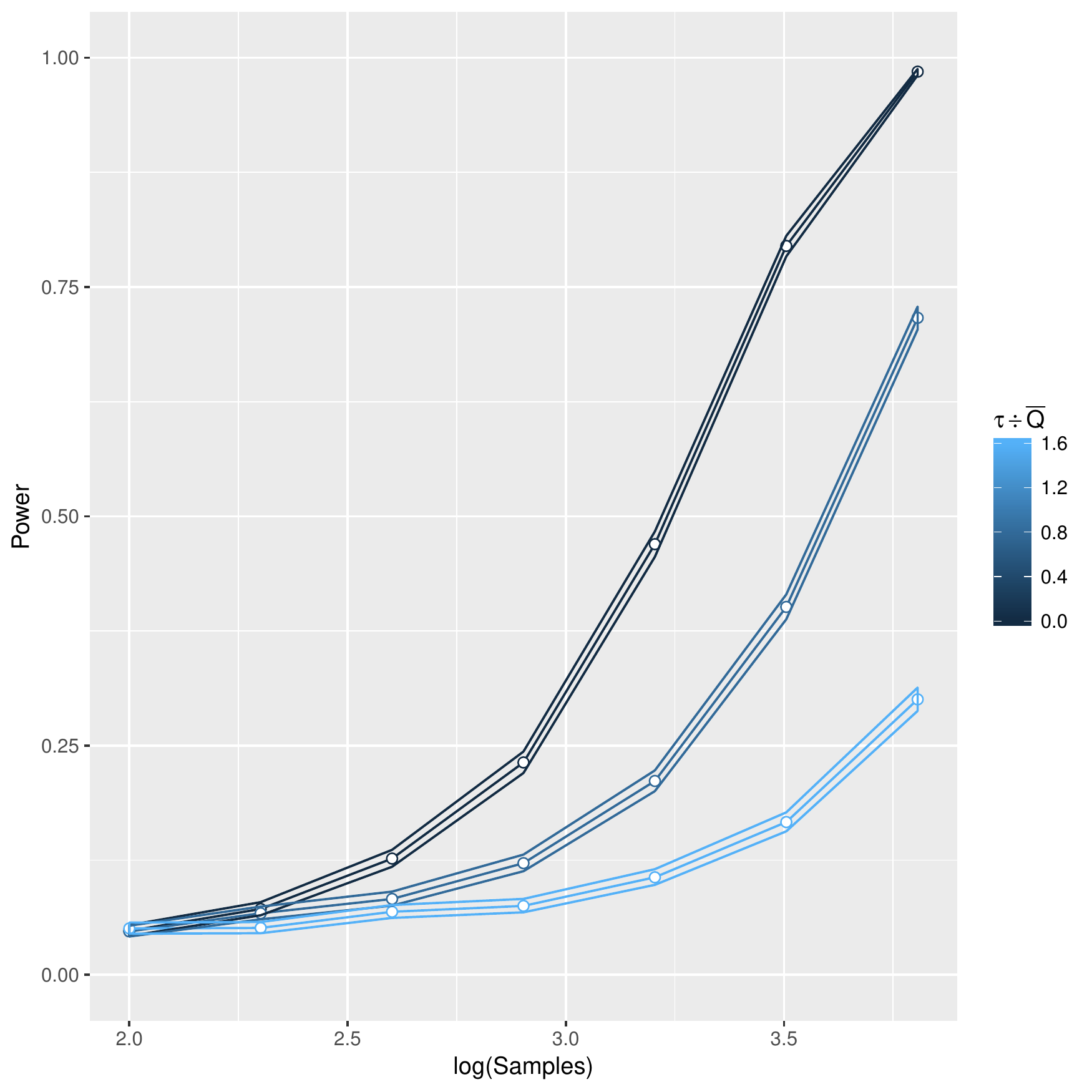}
\end{minipage} %
\begin{minipage}{0.49\textwidth}
  \includegraphics[width=0.95\textwidth]{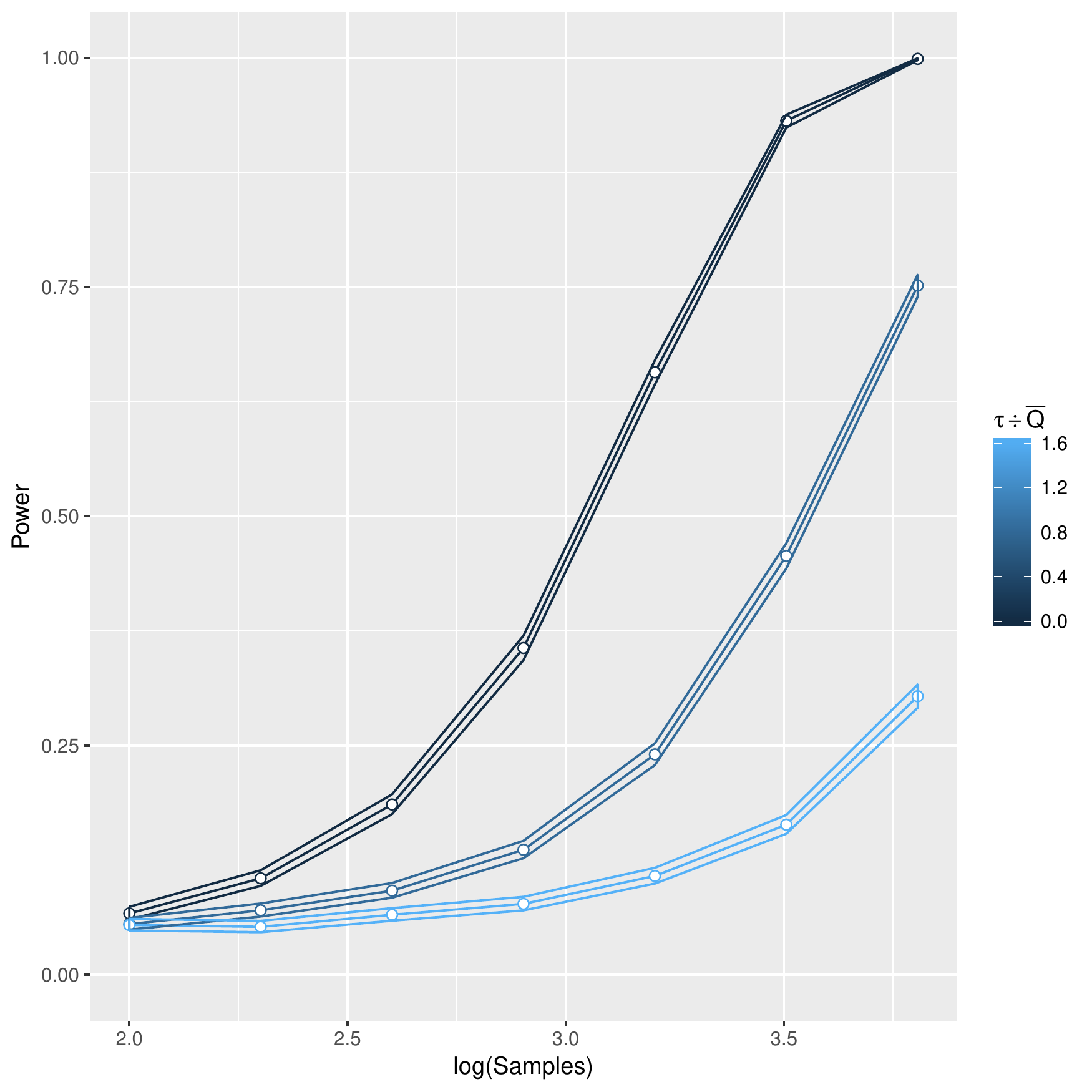}
\end{minipage}
\caption{Two-Fluid Flow power without (left) and with (right) pinned variables.
  Both cases consider the outer viscosity as a lurking variable, while the right
  additionally considers the inner thickness as a pinned variable. In the cases
  considered here, the presence of a pinned variable does not result in a significant
  power loss.}
\label{fig:two-fluid-2}
\end{figure}

\begin{table}[!ht]
\centering
\begin{tabular}{@{}llll@{}}
\toprule
 $n$ & \multicolumn{3}{c}{$\tau$} \\
 \cmidrule(lr){2-4}
 & $0.00$ & $0.50$ & $1.00$ \\
\midrule
$100$  & $0.48$, $0.0743$ & $0.50$, $0.0820$ & $0.51$, $0.0803$\\
$200$  & $0.48$, $0.0832$ & $0.48$, $0.0793$ & $0.49$, $0.0825$\\
$400$  & $0.47$, $0.0773$ & $0.50$, $0.0843$ & $0.51$, $0.0838$\\
$800$  & $0.50$, $0.0815$ & $0.49$, $0.0831$ & $0.49$, $0.0801$\\
$1600$ & $0.51$, $0.0807$ & $0.49$, $0.0839$ & $0.50$, $0.0807$\\
$3200$ & $0.49$, $0.0799$ & $0.49$, $0.0910$ & $0.50$, $0.0832$\\
$6400$ & $0.51$, $0.0835$ & $0.51$, $0.0829$ & $0.51$, $0.0789$\\
\bottomrule
\end{tabular}

\caption{Two-Fluid Flow moment estimates of p-value distribution with no
  lurking variables. These results suggest that under these conditions, the
  distribution of p-values is uniform.}
\label{tab:two-fluid-moments}
\end{table}

\begin{figure}[htbp]
\centering
\includegraphics[width=0.45\textwidth]{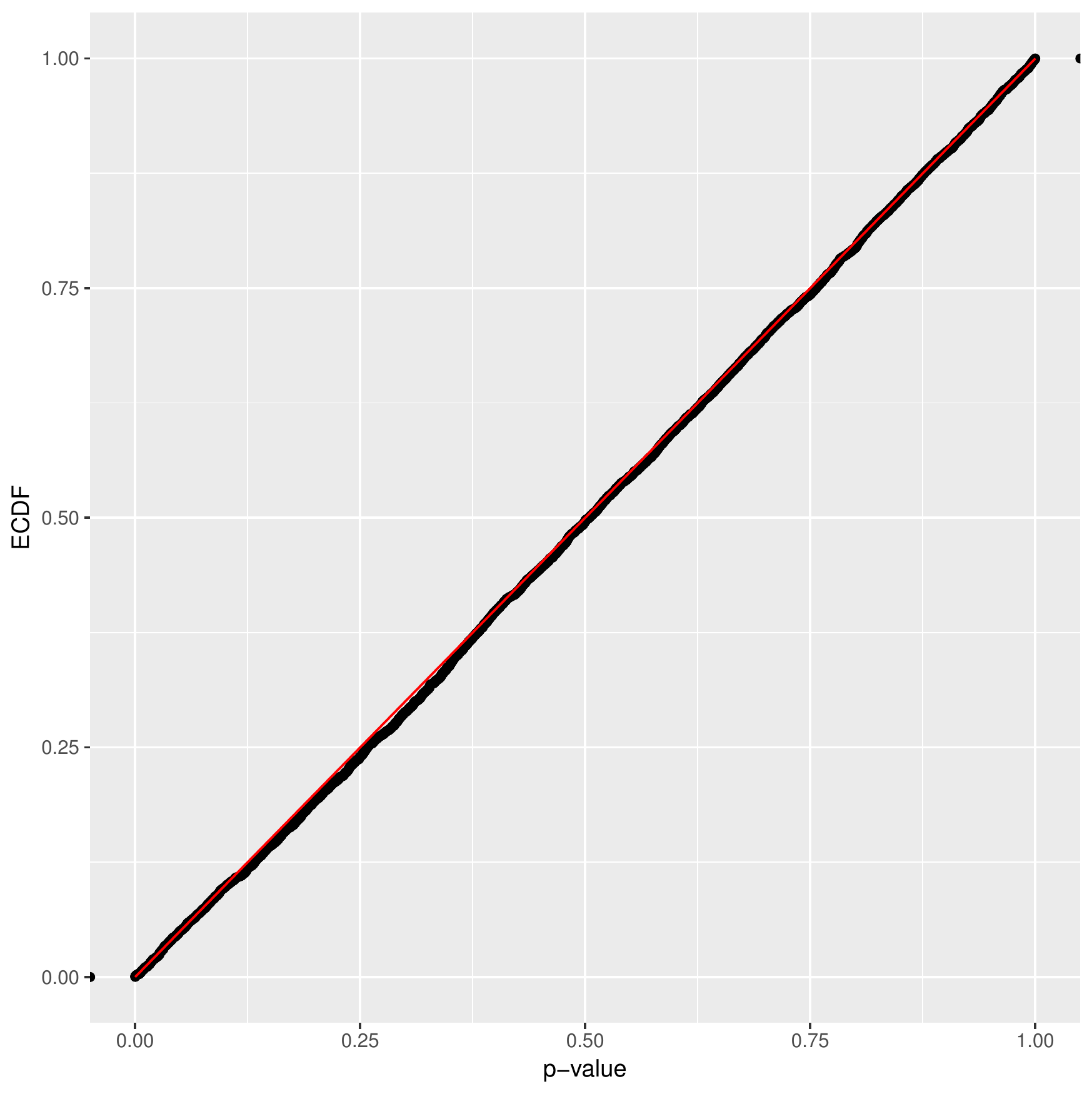}
\caption{\label{fig:org4ce23f5}
Two-Fluid Flow p-value empirical CDF with no lurking variables, \(n=6400,\tau=0.5\). Sorted p-values are denoted by dots, with an added red diagonal. The distribution of p-values is approximately uniform, endorsing the choice of reference distribution.}
\end{figure}

As described above, the quantity \(\v\nu\) contains useful information if
lurking variables exist. To illustrate, we consider realizations of the sample
estimate \(\hat{\v\nu}\) in the two cases considered in Figure
\ref{fig:two-fluid-2}. For comparison, we compute the quantity
\(\mW_{\PI}\mW_{\PI}^T\mD_{\LU}\) and scale this vector to have the same length as
the estimate \(\hat{\v\nu}\); this quantity is reported as \(\vw\). Since there is
only one lurking variable in these cases, we expect that \(\E[\hat{\v\nu}]=\vw\).

\begin{table}[!ht]
\centering
\begin{tabular}{@{}lllll@{}}
\toprule
Case & & M & L & T\\
\midrule
Without pinned var & \(\hat{\v\nu}_L\) & 0.8165 & -0.9752 & -0.7370\\
 & \(\vw_L\) & 0.8487 & -0.8487 & -0.8487\\
\midrule
With pinned var & \(\hat{\v\nu}_R\) & 0.6731 & 0.0000 & -0.6064\\
 & \(\vw_R\) & 0.6406 & 0.0000 & -0.6406\\
\bottomrule
\end{tabular}
\caption{Two-Fluid Flow dimension vector estimates. Note that interpreting the
  dimension vector is relatively straightforward in the case without pinned
  variables; since the lurking variable is a viscosity, the dimension vector
  matches the expected $(1,-1,-1)^T$, up to a scaling constant. While these
  results are encouraging, further work is necessary to provide a formal
  test procedure.}
\label{tab:two-fluid-nu}
\end{table}

\section{Discussion}
\label{sec:org0b49157}
In this article, we presented a modified form of the Buckingham \(\pi\) theorem
suitable for testing the presence of lurking variables. We then constructed
experimental detection procedures based on a sampling plan informed by Stein's
lemma, a reference distribution arising from Hotelling's \(T^2\) test, and
reasonable assumptions on the distribution of the response. We supported these
assumptions through example problems inspired by engineering applications.

Two points are important to elucidate: the requirements on sampling design, and
the sample size requirements for reasonable power. Note that our experimental
detection procedure requires that samples be drawn from a Gaussian distribution;
this precludes factors which take values at fixed levels. Nonetheless, the
potential applications of this approach are myriad, as many physical systems
of practical interest feature continuous predictors.

Second, our results suggest that for low sample counts \((n<100)\), the detection
power may be unacceptably low \((<0.05)\). To achieve reasonable power, say
\(0.80\), our numerical experiments suggest that \(n>1000\) is necessary for these
detection procedures. While sample counts in the thousands are not uncommon for
computer experiments, such requirements are beyond a reasonable count for many
physical experiments. However, recent advances in microfluidics have enabled
kilohertz-throughput experiments which could easily reach our sampling
requirements.\cite{abate2010high} On the macro-scale, so-called \emph{cyber-physical
systems} enable the automated collection of data, such as \(1260\) unique cases of
pitching and heaving conditions of an airfoil.\cite{vanburen2112airfoils} Of
course, our immediate goal for future work is to reduce the sampling
requirements; the intent of the present article is to provide a lucid treatment
of detection fundamentals, and to illustrate these principles with minimal
assumptions.

Note that in this article, we make relatively modest assumptions on the
functional relation between response and predictors. Stein's lemma may be
regarded as implicitly utilizing the smoothness of the response. One could
potentially employ stronger assumptions to fruitful ends; namely, increasing
power.

Finally, we hope that both the analysis and procedures presented here prove
useful to further study. As Albrecht et al.\cite{albrecht2013} note, Dimensional
Analysis is less-studied in the statistics community, and certainly has more
untapped potential. We have found the analytical framework presented in Section
\ref{sec:org6f9d706} helpful in reasoning about lurking variables, and hope that
others find it similarly useful.
\section{Acknowledgments}
\label{sec:org025151b}
The authors would like to thank Jessica Hwang and Art Owen for useful comments
on the manuscript. The first author would like to thank Arman Sabbaghi for some
early comments on this work, and helpful pointers with regard to existing
literature. The first author is supported by the National Science Foundation
Graduate Research Fellowship under Grant No. DGE-114747. The second author was
supported by the NSF under Grants No. DMS-1407397 and DMS-1521145.

\bibliographystyle{plain}
\bibliography{man}

\section{Appendix}
\label{sec:orgf3f449b}
\begin{remark}[Unit Systems]\label{rmk:units}
There is historical precedent for changing unit systems; the 1875 Treaty of the
Metre established a standard unit of length based on a prototype metre, kept in
controlled conditions. This was redefined again in 1960 in terms of the
krypton-86 spectrum, to avoid the obvious issues of such a prototype
definition.\cite{britannica2017} The meter was redefined several times since,
and now is based on the distance light travels in a vacuum in $1/299,792,458$
seconds.\cite{thompson2008}
\end{remark}
\subsection{Unique non-dimensionalizing factor}
\label{sec:org8e1bcfd}
   Suppose we have some dimensional qoi \(\qoi\). We may form a dimensionless qoi
\(\pi\) by constructing a \emph{non-dimensionalizing} factor as a power-product of
the input quantities \(\v\var\). Such a non-dimensionalizing factor satisfies
\([\prod_{i=1}^{\nvar}\var_i^{u_i}] = [\qoi]\). Having found such a
\emph{non-dimensionalizing vector} \(\vu\in\R{\nvar}\), we may form

$$\pi=\qoi \prod_{i=1}^{\nvar}\var_i^{-u_i} = q\exp(-\vu^T\log(\v\var)).$$

Dimensional homogeneity demands that \(\vd(q)\in\cR(\mD)\), thus a
non-dimensionalizing factor always exists. However, an analyst may not be aware
of the full \(\v\var\), and may know only of the exposed variables \(\v\var_{\EX}\).
A non-dimensionalizing vector \(\vu_{\EX}\) may not exist for the exposed factors,
and is not necessarily unique. As Bridgman \cite{bridgman1922dimensional} notes,
one must have \(\vd(q)\in\cR(\mD_{\EX})\) in order for dimensional homogeneity to
hold. Below we prove existence and uniqueness of a particular \(\vu_{\EX}\) under
this condition.

\begin{theorem}[Existence of a unique non-dimensionalizing factor]\label{thm:non-dim}
If a physical relationship is dimensionally homogeneous in the full $\v\var$,
and some $\v\var_{\EX}$ are known with $\vd(q)\in\cR(\mD_{\EX})$, there exists a
unique non-dimensionalizing vector $\vu_{\EX}^*\in\R{\nvar_{\EX}}$ for the qoi
$\qoi$ that is orthogonal to the nullspace of $\mD_{\EX}$.
\end{theorem}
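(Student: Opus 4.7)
The plan is to reduce the claim to a standard linear algebra decomposition. The condition $\vd(q)\in\cR(\mD_{\EX})$ says precisely that the linear system $\mD_{\EX}\vu_{\EX}=\vd(q)$ is consistent, so at least one non-dimensionalizing vector $\vu$ exists; the affine solution set is then $\vu+\cN(\mD_{\EX})$. The target subspace, ``orthogonal to the nullspace of $\mD_{\EX}$,'' is $\cN(\mD_{\EX})^\perp=\cR(\mD_{\EX}^T)$, and the fundamental theorem of linear algebra gives the orthogonal direct sum decomposition $\R{\nvar_{\EX}}=\cN(\mD_{\EX})\oplus\cR(\mD_{\EX}^T)$.

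For existence, I would take any particular solution $\vu$ of $\mD_{\EX}\vu=\vd(q)$ and decompose it uniquely as $\vu=\vu_R+\vu_N$ with $\vu_R\in\cR(\mD_{\EX}^T)$ and $\vu_N\in\cN(\mD_{\EX})$. Then
\begin{equation*}
\mD_{\EX}\vu_R=\mD_{\EX}(\vu-\vu_N)=\mD_{\EX}\vu=\vd(q),
\end{equation*}
so setting $\vu_{\EX}^*:=\vu_R$ produces a non-dimensionalizing vector orthogonal to $\cN(\mD_{\EX})$. Equivalently, I would note that $\vu_{\EX}^*=\mD_{\EX}^{+}\vd(q)$, where $\mD_{\EX}^{+}$ is the Moore--Penrose pseudoinverse, furnishes the same vector by construction.

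For uniqueness, suppose $\vu_1^*,\vu_2^*\in\cR(\mD_{\EX}^T)$ both satisfy $\mD_{\EX}\vu_i^*=\vd(q)$. Their difference $\vu_1^*-\vu_2^*$ lies in $\cN(\mD_{\EX})$ (since $\mD_{\EX}(\vu_1^*-\vu_2^*)=0$) and simultaneously in $\cR(\mD_{\EX}^T)$ (being a difference of two vectors from that subspace). The orthogonality of the decomposition forces $\cN(\mD_{\EX})\cap\cR(\mD_{\EX}^T)=\{0\}$, so $\vu_1^*=\vu_2^*$.

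There is no real obstacle here, so the bulk of the work is just articulating the decomposition cleanly and making sure the phrase ``orthogonal to the nullspace'' is interpreted as membership in $\cN(\mD_{\EX})^\perp$ under the standard Euclidean inner product on $\R{\nvar_{\EX}}$. If desired, I would close with a brief remark that this uniqueness is what makes $\vu_{\EX}^*$ a natural, basis-independent choice among the infinite family of admissible non-dimensionalizing factors discussed in Section on non-dimensionalization.
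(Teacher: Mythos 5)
Your proof is correct, but it takes a genuinely different route from the paper's. The paper assembles the stacked matrix $\mM=\left[\begin{smallmatrix}\mD_{\EX}\\ \mV_{\EX}^T\end{smallmatrix}\right]$, where $\mV_{\EX}$ is a basis for $\text{Null}(\mD_{\EX})$, pairs it with the right-hand side $\vb^T=[\vd(\qoi)^T,\vsym{0}^T]$, verifies consistency and that $\text{Rank}(\mM)=\nvar_{\EX}$ by counting the $\rnk_{\EX}$ independent rows of $\mD_{\EX}$ and the $\nvar_{\EX}-\rnk_{\EX}$ independent rows of $\mV_{\EX}^T$, and then invokes the Rouch\'{e}--Capelli theorem to get existence and uniqueness in one stroke. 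You instead appeal directly to the orthogonal decomposition $\R{\nvar_{\EX}}=\cN(\mD_{\EX})\oplus\cR(\mD_{\EX}^T)$: project any particular solution onto the row space for existence, and use $\cN(\mD_{\EX})\cap\cR(\mD_{\EX}^T)=\{0\}$ for uniqueness. Your argument is arguably cleaner conceptually and yields the explicit closed form $\vu_{\EX}^*=\mD_{\EX}^{+}\vd(\qoi)$ as a bonus, which is how one would actually compute the vector; the paper's version has the advantage of exhibiting a concrete square-rank linear system $\mM\vu_{\EX}=\vb$ to solve, consistent with the computational recipes given elsewhere in the text (e.g., obtaining bases via QR). One small point of care: your identification of ``orthogonal to the nullspace'' with membership in $\cR(\mD_{\EX}^T)$ is exactly the intended reading, and the paper encodes the same condition as the constraint $\mV_{\EX}^T\vu_{\EX}=\vsym{0}$, so the two proofs establish the same statement.
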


\begin{proof}[Proof of Theorem \ref{thm:non-dim}]
Since $\vd(\qoi)\in\cR(\mD_{\EX})$, we know a solution to
$\mD_{\EX}\vu_{\EX}=\vd(\qoi)$ exists. Denote $r_{\EX}=\text{Rank}(\mD_{\EX})$.
Employing the Rank-Nullity theorem, let
$\mV_{\EX}\in\mathbb{R}^{\nvar_{\EX}\times(\nvar_{\EX}-\rnk_{\EX})}$
be a basis for $\text{Null}(\mD_{\EX})$. Define the matrix

\begin{equation}
  \begin{aligned}
    \mM = \left[\begin{array}{c} \mD_{\EX} \\ \mV_{\EX}^T \end{array}\right],
  \end{aligned}
\end{equation}

\noindent and note that
$\mM\in\mathbb{R}^{(\ndim+\nvar_{\EX}-\rnk_{\EX})\times\nvar_{\EX}}$. Define the vector
$\vb\in\mathbb{R}^{(\ndim+\nvar_{\EX}-\rnk_{\EX})}$ via $\vb^T=[\vd(\qoi)^T,\vsym{0}^T]$,
where $\vsym{0}\in\mathbb{R}^{\nvar_{\EX}-\rnk_{\EX}}$. Then the solution to the
linear system $\mM\vu_{\EX}=\vb$ is a non-dimensionalizing vector for $\qoi$,
and is orthogonal to the nullspace of $\mD_{\EX}$.

Note that $\vd(\qoi)\in\cR(\mD_{\EX})$ and $\vsym{0}\in\cR(\mV_{\EX}^T)$, thus the
augmented matrix $[\mM|\vb]$ has the property
$\text{Rank}(\mM)=\text{Rank}([\mM|\vb])$. Note also that there are $\rnk_{\EX}$
independent rows in $\mD_{\EX}$ and $\nvar_{\EX}-\rnk_{\EX}$ independent rows in
$\mV_{\EX}^T$, with $\mD_{\EX}\mV_{\EX}=0$. Thus we have $\text{Rank}(\mM)=\nvar_{\EX}$.
By the \inlinelatex{Rouch\'{e}-Capelli} theorem, we know that a solution $\vu_{\EX}^*$ to
$\mM\vu_{\EX}=\vb$ exists and is unique.
\end{proof}
\end{document}